\renewcommand{\baselinestretch}{1.5} 
\definecolor{light-gray}{gray}{0.9}
\newtheorem{theorem}{Theorem}[section]
\newtheorem{proposition}[theorem]{Proposition}
\newtheorem{corollary}[theorem]{Corollary}
\newtheorem{example}[theorem]{Example}
\newtheorem{open}{Open Problem}
\theoremstyle{definition}
\newtheorem{remark}[theorem]{Remark}
	\newcommand\eat[1]{}
	\newlength{\wordlength}
	\newcommand{\eqclass}[2][]{\ifthenelse{\equal{#1}{}}{[#2]}{[#2]_{\sim_{#1}}}}
\newcommand{\nbh}[1][]{
	\ifthenelse{\equal{#1}{}}{\nu}{\nu(#1)}
}
\newcommand{\cstr}[1][]{
	\ifthenelse{\equal{#1}{}}{\mathscr S}{\cstr(#1)}
}
\newcommand{\choice}[1][]{
\ifthenelse{\equal{#1}{}}{\mathit{C}}{\choice(#1)}

\newcommand{\ml}[1][]{\ensuremath{\ifthenelse{\equal{#1}{}}{\mathit{ML}}{\mathit{ML}(#1)}}\xspace}
\newcommand{\sml}[1][]{\ensuremath{\ifthenelse{\equal{#1}{}}{\mathit{SML}}{\mathit{SML}(#1)}}\xspace}
\newcommand{\sd}[1][]{\ensuremath{\ifthenelse{\equal{#1}{}}{\mathit{SD}}{\mathit{SD}(#1)}}\xspace}
\newcommand{\rsd}[1][]{\ensuremath{\ifthenelse{\equal{#1}{}}{\mathit{RSD}}{\mathit{RSD}(#1)}}\xspace}
\newcommand{\rd}[1][]{\ensuremath{\ifthenelse{\equal{#1}{}}{\mathit{RD}}{\mathit{RD}(#1)}}\xspace}
\newcommand{\st}[1][]{\ensuremath{\ifthenelse{\equal{#1}{}}{\mathit{ST}}{\mathit{ST}(#1)}}\xspace}
\newcommand{\bd}[1][]{\ensuremath{\ifthenelse{\equal{#1}{}}{\mathit{BD}}{\mathit{BD}(#1)}}\xspace}
\newcommand{\pc}[1][]{\ensuremath{\ifthenelse{\equal{#1}{}}{\mathit{PC}}{\mathit{PC}(#1)}}\xspace}
\newcommand{\dl}[1][]{\ensuremath{\ifthenelse{\equal{#1}{}}{\mathit{DL}}{\mathit{DL}(#1)}}\xspace}
\newcommand{\ul}[1][]{\ensuremath{\ifthenelse{\equal{#1}{}}{\mathit{UL}}{\mathit{UL}(#1)}}\xspace}

	\newcommand{\indiff}{\ensuremath{\sim}}}
\newcommand{\pbDef}[3]{%
\noindent
\begin{center}
\begin{boxedminipage}{0.98 \columnwidth}
#1\\[5pt]
\begin{tabular}{l p{0.75 \columnwidth}}
Input: & #2\\
Question: & #3
\end{tabular}
\end{boxedminipage}
\end{center}
}
\begin{document}

\title{Computing Welfare-Maximizing Fair Allocations of Indivisible Goods}
		
\author{Haris Aziz}\ead{haris.aziz@unsw.edu.au}
\address{UNSW Sydney, Kensington, Sydney, Australia, NSW 2035}

\author{Xin Huang}\ead{xinhuang@campus.technion.ac.il }
\address{Technion, Haifa 32000, Israel}

\author{Nicholas Mattei}\ead{nsmattei@tulane.edu}
\address{Tulane University, 6823 St Charles Ave, New Orleans, LA 70118}

\author{Erel Segal-Halevi (corresponding author) }
\ead{erelsgl@gmail.com}
\address{Ariel University, Kiriat Hamada 3, Ariel 40700, Israel}

\begin{keyword}
Assignment;
Group decisions and negotiations;
Fair division;
Indivisible items; 
Utilitarian welfare.
\end{keyword}

\begin{abstract}
We analyze the run-time complexity of computing allocations that are both fair and maximize the utilitarian social welfare, defined as the sum of agents' utilities. We focus on two tractable fairness concepts: envy-freeness up to one item (EF1) and proportionality up to one item (PROP1). We consider two computational problems: (1) Among the utilitarian-maximal allocations, decide whether there exists one that is also fair; (2) among the fair allocations, compute one that maximizes the utilitarian welfare. We show that both problems are strongly NP-hard when the number of agents is variable, and remain NP-hard for a fixed number of agents greater than two. For the special case of two agents, we find that problem (1) is polynomial-time solvable, while problem (2) remains NP-hard. Finally, with a fixed number of agents, we design pseudopolynomial-time algorithms for both problems. 
We extend our results to the stronger fairness notions envy-freeness up to any item (EFx) and proportionality up to any item (PROPx).
\end{abstract}

\maketitle


\sloppy

\newpage

\section{Introduction}
There are many problems in which both \emph{fairness} and \emph{efficiency} are important considerations. Recent examples from the operations research literature are scheduling \cite{agnetis2019price}, disaster relief \cite{erbeyouglu2020robust}, vehicle routing \cite{jozefowiez2008multi} ambulance planning \cite{jagtenberg2020improving}, and multi-portfolio optimization \cite{iancu2014fairness}.
In this paper we focus on algorithms for \emph{allocating indivisible goods} among agents. 
Such algorithms have broad impact in a number of areas including 
{school choice \cite{APR05a}, conference paper assignment \cite{LMNW18a}, course allocation \cite{BuCa12a}, warehouse delivery \cite{karaenke2020non},} and many others.  Two often competing objectives are balancing the \emph{welfare} of the allocation, defined as the sum of the utilities of the agents, with the \emph{fairness}, which concerns the utility of each individual agent.

When allocating indivisible items, perfect fairness may be unattainable even when there are two agents and a single item. Indeed, many algorithms for fair allocation of indivisible items simply fail if a fair allocation does not exist \cite{kilgour2018two}.
An alternative approach, which is arguably more suitable than just failing, is \emph{fairness up-to one item}. For example, \emph{envy-freeness up to one item (EF1)} requires that, for any pair of agents, if at most one item is removed from one agent's bundle, then the other agent does not envy \citep{Budi11a}.
{There are many other fairness notions based on the ``up-to one item'' concept; see Section \ref{sec:setting} for the formal definitions.}

Fairness requirements are often complemented by requirements for economic efficiency, the most common of which is \emph{Pareto-efficiency (PE)}. Many recent works have studied PE+fair allocations for various fairness notions (see Related Work). 
However, PE alone is not a sufficient condition for economic efficiency. As an example, evidence from course allocation shows that Pareto-efficient mechanisms perform poorly on natural measures such as the fraction of students getting their first choice, or the average rank of a student \citep{BuCa12a}.
Such measures can be captured by a stronger measure of economic efficiency: the \emph{utilitarian welfare}, defined as the sum of all agents' utilities, or equivalently, the average utility per agent
\cite{Moul03a}.

There are other settings in which the sum of utilities is a natural measure of efficiency. For example, if the items for allocation are vaccinations, and the utility of an agent is proportional to their probability to survive given the allocated vaccination (which can be computed using statistics on past medical records), then the utilitarian welfare corresponds to the expected number of survivors. 
If the items are allocated by a politician who wants to be re-elected, and the utility of an agent determines the probability that they vote for the politician, then the utilitarian welfare corresponds to the expected number of supporters. 
Finally, in situations of repeated allocations, e.g. when different items are allocated among the same agents each day, an algorithm attaining a higher average utility per agent may eventually lead to a higher total utility for every agent.
To illustrate, suppose there are two agents and two possible allocations: one gives the first agent utility 3 and the second agent utility 1, while the other gives the first agent utility 1 and the second agent utility 5. Both allocation are Pareto-efficient. However, if the setting repeats daily, with the roles of ``first agent'' and ``second agent'' selected uniformly at random each day, then the second allocation is superior, since it gives both agents an average utility of 3 per day.
{
Some other applications in the operations research literature, in which utilitarian welfare is used, are house allocation \cite{arnosti2020design}, school choice \cite{biro2021complexity} and product-line design \cite{kohli1990heuristics}.
}

We refer to allocations with a highest utilitarian welfare as \emph{utilitarian maximal (UM)}. 
We focus on the complexity of computing an allocation that maximizes the sum of utilities among those that satisfy a given (approximate) fairness notion. We also consider the problem of deciding whether an allocation exists which simultaneously maximizes the sum of utilities and satisfies a fairness notion.  These results shed light on the settings where we are able to guarantee the tractable computation of allocations that are both fair and efficient.
It is well-known that any UM allocation is PE, but the opposite is not true. So the combination of UM and fairness is strictly stronger than PE and fairness.

\smallskip
\textbf{Contributions.} Given a fairness requirement, we want to decide whether there exists a UM allocation that satisfies it. If no such allocation exists, we want to find an allocation with highest welfare among the fair ones. However, we show that both these goals are NP-hard for three or more agents even when the number of agents is fixed, and strongly NP-hard when the number of agents is variable, that is, part of the input.\footnote{By strongly NP-hard, we mean that the problem remains NP-hard even if the numbers in the input are represented in unary representation~\citep{GaJo78a}
.}

When there are only two agents, deciding the existence of fair and UM allocations turns out to be polynomial-time solvable. 
In contrast, welfare maximization within the set of fair allocations is NP-hard even for two agents. 
Finally, for any fixed number of agents, we present a pseudopolynomial time algorithm for maximizing welfare constrained to fair allocations.
Hence, we obtain a clear understanding of the complexity of efficient fair allocation w.r.t. the number of agents. 
Our results are summarized in Table~\ref{table:summary-complexity}; see Section \ref{sec:setting} for the formal definitions of the fairness concepts.

\begin{table*}[h!]
\begin{center}
\scalebox{0.7}{
\begin{tabular}{  l  l  ll }
\toprule
\toprule
&$\mathbf{n=2}$ &\textbf{Fixed} $\mathbf{n\geq 3}$& \textbf{Arbitrary $n$ }\\ 
\midrule
\midrule

\textbf{Exists UM and EF1} &in P~(Th.~\ref{2agents-ef1-um})&
\shortstack{NP-complete~(Th.~\ref{um-ef1-hard}),\\pseudo-poly.~(Th.~\ref{um-within-ef1-pseudopoly})}
&
\shortstack{
strongly NP-complete
\\
(Th.~\ref{um-ef1-stronghard})
}
\\ \midrule

\textbf{Exists UM and PROP1} &in P~(Th.~\ref{2agents-prop1-um})&
\shortstack{
NP-complete~(Th.~\ref{um-prop1-hard}),\\
pseudo-poly.~(Th.~\ref{um-within-prop1-pseudopoly})}
&
\shortstack{
strongly NP-complete
\\
(Th.~\ref{um-prop1-stronghard})
} \\ \midrule

\textbf{Compute UM within EF1} &
 NP-hard~(BGJKN 4)* & 
 \shortstack{
 NP-hard~(Cor.~\ref{um-within-ef1-hard}),\\
pseudo-poly.~(Th.~\ref{um-within-ef1-pseudopoly})}
&
\shortstack {
strongly NP-hard 
\\(BGJKN 5;
Cor.~\ref{um-within-ef1-strongly-hard}
}
)\\ 
\midrule
\textbf{Compute UM within PROP1}&
 NP-hard~(Th.~\ref{th:um-within-prop1})* &
 \shortstack{NP-hard (Cor.~\ref{um-within-prop1-hard}),\\
 pseudo-poly.~(Th.~\ref{um-within-prop1-pseudopoly})}.
 &
 \shortstack{
 strongly NP-hard
 \\ (Cor.~\ref{um-within-prop1-strongly-hard})
 }\\
 
\midrule
\midrule

\textbf{Exists UM and EFx} &
NP-complete (Th.~\ref{2agents-efx-um})&
\shortstack{NP-complete~(Th.~\ref{um-efx-hard}),\\pseudo-poly.~(Th.~\ref{um-within-efx-pseudopoly})}
&
\shortstack{
strongly NP-complete
\\
(Th.~\ref{um-efx-stronghard})
}
\\ \midrule

\textbf{Exists UM and PROPx} &NP-complete~(Th.~\ref{2agents-propx-um})&
\shortstack{
NP-complete~(Th.~\ref{um-propx-hard}),\\
pseudo-poly.~(Th.~\ref{um-within-propx-pseudopoly})}
&
\shortstack{
strongly NP-complete
\\
(Th.~\ref{um-propx-stronghard})
} \\ \midrule

\textbf{Compute UM within EFx} &
 NP-hard~(Cor. \ref{2agents-um-within-efx}) & 
 \shortstack{
 NP-hard~(Cor.~\ref{um-within-efx-hard}),\\
pseudo-poly.~(Th.~\ref{um-within-efx-pseudopoly})}
&
\shortstack {
strongly NP-hard 
\\(Cor.~\ref{um-within-efx-strongly-hard})
}\\ 
\midrule
\textbf{Compute UM within PROPx}&
 NP-hard~(Cor. \ref{2agents-um-within-propx}) &
 \shortstack{NP-hard (Cor.~\ref{um-within-propx-hard}),\\
 pseudo-poly.~(Th.~\ref{um-within-propx-pseudopoly})}.
 &
 \shortstack{
 strongly NP-hard
 \\ (Cor.~\ref{um-within-propx-strongly-hard})
 }\\
\bottomrule
\bottomrule
\end{tabular}
}
\end{center}
\caption{Complexity of existence and computation of allocations that are welfare maximizing and fair.  
See Section \ref{sec:setting} for the formal definitions of the various fairness concepts.
BGJKN $k$ refers to a result that is implied by Theorem $k$ from \citet{BJK+19b}. An asterisk (*) means that the hardness proof uses non-normalized valuations (the sum of valuations of one agent is larger than the sum of valuations of the other agent); it is open whether the problem remains hard in the common special case of normalized valuations (see Question \ref{open:normalized}).
}
\label{table:summary-complexity}
\vspace*{-4mm}
\end{table*}

\section{Setting}
\label{sec:setting}
An allocation problem is a tuple $(N,O,u)$ such that $N=\{1,\ldots, n\}$ is a set of agents, $O=\{o_1,\ldots, o_m\}$ is a set of items and 
$u$ specifies a utility function $u_i: O \rightarrow \mathbb{R^+}$ for each agent $i\in N$. We assume agents have additive utility. That is, $u_i(O')=\sum_{o\in O'}u_i(o)$ {for every subset $O'\subseteq O$.}

An {\em allocation} $p$ is a function $p:N\rightarrow 2^O$ assigning each agent a set of items. Allocations must be complete, i.e., all items are allocated, $\bigcup_{i\in N}p(i)=O$, and the bundles of items assigned to agents must be disjoint, i.e., no two agents can be assigned the same items, $p(i)\cap p(j)=\emptyset$ for all $i,j\in N$.
For a given instance, we denote by $\mathcal{A}$ the set of all allocations. 
We do not consider strategic manipulations --- we assume that all agents truthfully report their valuations ---  leaving strategic issues for future work \cite{BCM15a}.

\subsection{Fairness}
An allocation $p$ is called:
\begin{itemize}
\item 
\emph{Proportional (PROP)} if for each agent $i\in N$, $u_i(p(i))\geq u_{i}(O)/n$.
\item 
\emph{Proportional up to $c$ items (PROP$c$)} if for each agent $i\in N$,
there exists a subset $O'\subseteq O\setminus p(i)$ of cardinality $\leq c$ for which
$u_i(p(i) \cup O')\geq u_{i}(O)/n$. 
\item 
{
\emph{Proportional up to any item (PROPx)} if for each agent $i\in N$,
\emph{for all} subsets $O'\subseteq O\setminus p(i)$ of cardinality $1$, it holds that
$u_i(p(i) \cup O')\geq u_{i}(O)/n$. }
\item 
\emph{Envy-free (EF)}  if 
$u_i(p(i))\geq u_i(p(j))$ for all $i,j\in N$. 
\item 
\emph{Envy-free up to $c$ items (EF$c$)} if for all $i,j \in N$, there exists a subset $O'\subseteq p(j)$ of cardinality $\leq c$ such that $u_i(p(i)) \geq u_i(p(j)\setminus O')$. 
\item 
{
\emph{Envy-free up to any item (EFx)} if for all $i,j \in N$, \emph{for all} subsets $O'\subseteq p(j)$ of cardinality $1$, it holds that  $u_i(p(i)) \geq u_i(p(j)\setminus O')$. }
\item 
\emph{Equitable (EQ)} if for all $i,j \in N$, $u_i(p(i)) = u_{j}(p(j))$.
\item 
\emph{Equitable up to $c$ items (EQ$c$)} if for all $i,j \in N$,
there exists a subset $O'\subseteq p(j)$ of cardinality $\leq c$ for which
$u_i(p(i))\geq u_{j}(p(j)\setminus O')$. 
\item 
{
\emph{Equitable up to any item (EQx)} if for all $i,j \in N$,
\emph{for all} subsets $O'\subseteq p(j)$ of cardinality $1$, it holds that
$u_i(p(i))\geq u_{j}(p(j)\setminus O')$. 
}
\end{itemize}
{Note the key differences between the fairness notions: the proportionality-based and envy-free-based notions only compare valuations of the same agent, while equitability-based notions compare valuations of different agents. 
We consider intra-agent utility comparisons to be more meaningful than inter-agent utility comparisons; therefore, we focus on fairness notions based on proportionality and envy-freeness in the present paper.
}

It is well-known that, with additive valuations, EF implies EF1 and PROP.
Moreover, 
EF1 implies PROP1,
but PROP1 is strictly weaker than EF1 even for two agents
(see Appendix \ref{sec:ef1-prop1}).

For a given instance, {and a given fairness notion F (EF1, PROP1, EFx, PROPx, etc.)}, we denote by $\mathcal{A}^{\text{F}}$ the sets of all F allocations.

\subsection{Welfare}
While there are multiple notions of welfare, we focus on \emph{utilitarian-maximality}.
Allocation $p$ is:
\begin{itemize}
\item \emph{Utilitarian-maximal (UM)} if it maximizes the sum of utilities:
$$
p\in \arg\max_{q\in \mathcal{A}} \sum_{i\in N} u_i(q(i)).
$$
\item 
\emph{Utilitarian-maximal (UM) within F}, for a given fairness notion F, if 
$$
p\in \arg\max_{q\in \mathcal{A}^{\text{F}}}\sum_{i\in N} u_i(q(i)).
$$
\end{itemize}

\section{Related work}
\citet{BCM15a} present a general survey of the main algorithms and considerations in fair item allocation from a computer science perspective.
\citet{karsu2015inequity} present a more focused survey of the tradeoff between efficiency and fairness in the operations research literature.

It is well-known that an EF1 and PROP1 allocation can be computed in polynomial time \citep{LMMS04a}.
Similarly, a UM allocation can be computed in polynomial-time by just giving each item to an agent whose value for the item is highest. 
Utilitarian welfare can be maximized even with simple sequential mechanisms \citep{bouveret2011general,kalinowski2013social}.
Our results in Table \ref{table:summary-complexity} show that the combination of tractable fairness and tractable welfare requirements may be intractable.

A Pareto-efficient and PROP1 allocation can be computed in strongly-polynomial time in various settings \citep{CFS17a,ACI+18,BrSa19a,barman2019proximity,aziz2020polynomial}. The complexity of computing a Pareto-efficient and EF1 allocation is an open question, but a pseudopolynomial time algorithm is known \citep{BMV17a}.
Our results show that strengthening Pareto-efficiency to utilitarian welfare-maximization leads to strong NP-hardness.

\citet{bouveret2008efficiency,KBKZ09b,bliem2016complexity} study the computational complexity of finding an allocation that is both Pareto-efficient and envy-free. In their future work, \citet{bliem2016complexity} mention that ``a different theoretical route would be to extend the investigations also to ... approximate envy-freeness'', which is our focus.

\citet{bredereck2019high} present a meta-algorithm that
 can find efficient and fair allocations for various notions of efficiency and fairness. 
 Among others, their algorithm can handle notions of \emph{group Pareto-efficiency} \citep{aleksandrov2018group}, one of which is equivalent to UM. 
 However, the runtime of their algorithm is very large: it is larger than $d^{2.5 d}$, where $d$ is the number of variables in the resulting integer linear program (see their Proposition 8). This $d$ is larger than $((4 n V)^n)^{m(n+1)}$, where $n$ is the number of agents, $m$ the number of item-types, and $V$ the largest value of an item (see the end of their subsection 4.3).   In other words, their runtime is doubly-exponential in $m$ and $n$, and singly-exponential in $V$. Accordingly, they note in their conclusion section that their ILP solution is mainly a ``classification result'', and note that ``this leads us to the open questions of providing an algorithm for efficient envy-free allocation with better running time, or running-time lower bounds''.  In contrast, our algorithms (in Section \ref{sec:fixedn}) run in time singly-exponential in $n$, and polynomial in $m$ and $V$, addressing their open question.%
\footnote{
 Simultaneously to the present work, 
 \citet{bredereck2021high} have also improved the practical applicability of their ILP-based approach.
}

In previous work, \citet{BJK+19b}[Theorems 4, 5] proved hardness of problems that they call FA-EF1 and HET-EF1. Their results imply that computing a UM within EF1 allocation is NP-hard for any fixed number $n\geq 2$ of agents, and strongly NP-hard when $n$ is arbitrary (unbounded).
Their results do not cover PROP1, nor the problems of whether a UM and fair allocation exists (see Table \ref{table:summary-complexity} for comparison). In contrast to our hardness results, \citet{benabbou2020finding} showed that when valuations are submodular with binary marginals (each item adds value $0$ or $1$ to each bundle),  UM+EF1 allocations exist and can be found efficiently. 

\citet{freeman2019equitable} study the computation of allocations that are both PE and EQ1, as well as a stronger notion that they call EQx. 
They prove that, when all utilities are strictly positive, then a PE and EQx allocation always exists, and a PE+EQ1 allocation can be found in pseudopolynomial time. However, when some utilities may be zero, deciding whether a PE and EQ1 / EQx / equitable allocation exists is strongly NP-hard.
They do not discuss utilitarian-welfare maximization.


Many recent works aim to maximize the \emph{Nash welfare} --- the product of utilities. This problem is NP-hard, but various approximations are known \cite{cole2015approximating,darmann2015maximizing,branzei2017nash,caragiannis2019envy,caragiannis2019unreasonable,amanatidis2020maximum}.
We focus on the \emph{sum} of utilities,
which is one of the standard ways to measure the total welfare in society.

The fairness-welfare tradeoff has also been studied through the lens of the \emph{price of fairness} --- the ratio between the maximum welfare of an arbitrary allocation and the maximum welfare of a fair allocation.
Bounds on the price-of-fairness in indivisible item allocation have been proved by \citet{CKKK12,kurz2016price,nicosia2017price,Hale18b,suksompong2019fairly,agnetis2019price,bei2019price,barman2020settling}.
{\citet{argyris2022fair} presented a different approach, which is more related to our approach: they aim to maximize a social welfare measure, subject to the requirement that utility of each agent is at least as high as some reference value.}

\section{UM and EF1}
\label{sec:um-ef1}

It is well-known that EF1 may be incompatible with UM, in the sense that some instances do not admit an allocation that is both UM and EF1.
For example, if Alice's utility for every item is higher than Bob's utility, the only UM allocation gives all items to Alice, which is obviously not EF1 for Bob. Given this incompatibility, we would like to determine whether there exists, among all UM allocations, one that is also fair, but this is computationally challenging.

\renewcommand{\baselinestretch}{1.5} 

\begin{theorem}
\label{um-ef1-stronghard}
The problem  {\sc ExistsUMandEF1} --- deciding whether there exists an allocation that is both UM and EF1 --- is strongly NP-complete.
\end{theorem}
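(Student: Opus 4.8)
The plan is to first dispatch membership in NP, and then to obtain strong NP-hardness by a reduction from \textsc{Graph $3$-Colouring}. The point of choosing $3$-colouring is that its instances contain \emph{no large numbers}, so any reduction whose constructed valuations are polynomially bounded automatically yields \emph{strong} hardness; this sidesteps the usual difficulty that ``up-to-one-item'' fairness cannot pin down exact numerical balance, by encoding the hard structure combinatorially rather than arithmetically.

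For membership, I would use the fact (noted in the Related Work) that a UM allocation assigns each item to a highest-valuing agent; hence the maximal welfare is $W^*=\sum_{o\in O}\max_{i\in N}u_i(o)$, computable in polynomial time. Given a candidate allocation $p$, one certifies UM by checking $\sum_{i}u_i(p(i))=W^*$, and certifies EF1 by checking, for every ordered pair $(i,j)$, that $u_i(p(i))\ge u_i(p(j))-\max_{o\in p(j)}u_i(o)$. Both checks are polynomial, so \textsc{ExistsUMandEF1}$\in$NP.

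For hardness, from a graph $G=(V,E)$ I would build an instance with three \emph{colour agents} $c_1,c_2,c_3$, one \emph{edge agent} $a_e$ per edge $e\in E$, one \emph{vertex item} $o_v$ per vertex $v\in V$, and one \emph{private item} $d_i$ per colour agent. The valuations are: each $c_i$ values every vertex item at $2$, its own private item $d_i$ at $H:=2|V|+1$, and everything else at $0$; each edge agent $a_e$ with $e=\{u,v\}$ values the two items $o_u,o_v$ at $1$ and everything else at $0$. Two structural facts then drive the reduction. First, the set of highest-valuing agents of each item is fixed: a vertex item $o_v$ is valued $2$ by all three colour agents (strictly above the edge agents' value $1$), so every UM allocation routes $o_v$ to one colour agent, routes each $d_i$ to $c_i$, and leaves every edge agent empty; thus UM allocations correspond \emph{exactly} to assignments of the vertices to three colours. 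Second, reading off the EF1 constraints of such an allocation, the constraint of $a_e$ (with $e=\{u,v\}$) towards $c_i$ becomes $0\ge |\{o_u,o_v\}\cap p(c_i)|-1$, i.e.\ $c_i$ receives at most one of $o_u,o_v$; ranging over $i$ this says precisely that $u$ and $v$ receive different colours. All other EF1 constraints hold for free: edge agents have empty bundles, and the huge private items ensure each colour agent's own bundle is worth at least $H>2|V|$, dominating any rival colour bundle even after deleting one item, so no colour agent ever envies another. Hence a UM and EF1 allocation exists iff $G$ is properly $3$-colourable.

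The hard part of the write-up, and what I would verify most carefully, is that the \emph{only} binding EF1 constraints are the edge constraints that encode proper colouring. Concretely, I must confirm that the private items neutralise colour-vs-colour envy without perturbing the strict tie-break that keeps vertex items with the colour agents, and that the colour-vs-edge, edge-vs-colour (beyond the intended inequality), and edge-vs-edge comparisons are all vacuous because edge agents hold nothing. Once this is in place, the reduction is clearly polynomial and, crucially, every value used lies in $\{0,1,2,H\}$ with $H=2|V|+1$ polynomially bounded, so the instance has polynomial size even under a unary encoding. As \textsc{Graph $3$-Colouring} is NP-hard and the number of agents $n=3+|E|$ is unbounded, this establishes that \textsc{ExistsUMandEF1} is strongly NP-hard; combined with membership in NP, it is strongly NP-complete.
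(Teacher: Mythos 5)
Your proposal is correct, but it takes a genuinely different route from the paper. The paper reduces from \textsc{3-Partition}: it builds $m+1$ agents and $3m+2$ items, forces the UM allocation to give two high-value items to the last agent and the number-items to the first $m$ agents, and shows that EF1 holds exactly when the numbers admit an equi-partition into triplets of sum $T$; strong hardness is inherited from the strong NP-hardness of \textsc{3-Partition}. You instead reduce from \textsc{Graph 3-Colouring}, encoding the hard structure combinatorially via one edge agent per edge, so that the only binding EF1 constraints say that no colour agent receives both endpoints of an edge. I checked your constraints and they do work out: vertex items are strictly tie-broken to the colour agents (value $2$ versus $1$ versus $0$), so UM allocations are exactly the $3$-colourings; the private items of value $H=2|V|+1$ kill all colour-versus-colour envy outright; edge agents hold nothing, so all remaining comparisons are vacuous; and the edge-agent-versus-colour-agent EF1 test is satisfiable by deleting one endpoint if and only if the colouring is proper. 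Your approach buys immediacy of the \emph{strong} hardness claim (all values lie in $\{0,1,2,2|V|+1\}$, so no appeal to strong NP-hardness of a numeric problem is needed), at the cost of two properties the paper's reduction has: (i) the paper's instance is normalized (every agent's values sum to $(m+2)T$), so its hardness extends to the normalized special case, whereas your colour agents and edge agents have very different total values; and (ii) the paper's reduction transfers verbatim to EFx (Remark~\ref{um-efx-stronghard}) because the two items held by agent $m+1$ have equal value, whereas yours does not --- under the paper's EFx definition an edge agent could be forced to ``remove'' the zero-valued private item $d_i$ from a colour agent's bundle, leaving value $1>0$, so EFx would fail even for proper colourings. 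Neither point is a gap in the proof of the stated theorem, but they explain why the paper's choice of reduction pays off later.
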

\begin{proof}
The problem is in NP as both UM and EF1 can be tested in polynomial time. 
To prove NP-hardness, we reduce from the following problem, which is known to be strongly NP-hard
\citep{GaJo79a}.
\pbDef{\textsc{3-Partition}}
{
An integer $T>0$, a multiset $\{a_1,\ldots, a_{3m}\}$ of integers with  
$\frac{T}{4}<a_j<\frac{T}{2}$ for all $j\in[3m]$, 
and 
$\sum_{j=1}^{3m} a_j=  mT$.
}
{Can the integers be partitioned into $m$ disjoint triplets such that the sum in each triplet is $T$?}

We construct an instance of 
{\sc ExistsUMandEF1} with $m+1$ agents and $3m+2$ items. The first $3m$ items correspond to the $3m$ integers: their value for the first $m$ agents is equal to the corresponding integer, and their value for agent $m+1$ is $0$.
The last two items are valued at $T$ by the first $m$ agents and $(m/2+1)\cdot T$ by the last agent:
\begin{center}
\scalebox{0.85}{
\begin{tabular}{c|cc|c}
\hline
Items: &$1,\ldots,3m$ &$3m+1,~3m+2$ & Sum \\ 
\hline 
Agents $1,\ldots, m$:  & $v(o_j) = a_j$ & $T$ & $(m+2)T$ \\ 
\hline 
Agent $m+1$: & $0$ & $(m/2+1)\cdot T$ & $(m+2)T$ \\ 
\hline 
\end{tabular} 
}
\end{center} 
{Note that the instance is normalized, that is, the sum of values is the same for all agents.}

In this instance, an allocation is UM if-and-only-if 
the items $3m+1$ and $3m+2$ are given to
agent $m+1$,
and the items $1,\ldots,3m$ are given to agents $1,\ldots,m$.

Suppose we have a ``yes'' instance of \textsc{3-Partition}.	Then, we can allocate the first $3m$ items among the first $m$ agents in a way such that each agent gets utility $T$. We can allocate the items numbered $3m+1$ and $3m+2$ to agent $m+1$. The first $m$ agents are not envious of each other but they are envious of agent $m+1$ whose allocation would give them utility $2T$. However, if one of the items of agent $m+1$ is removed, then envy goes away. Hence there exists a welfare maximizing allocation which is also EF1. 

Now suppose that we have a ``no'' instance of \textsc{3-Partition}. Since there is no equi-partition of the $3m$ elements, there is at least one agent among the first $m$ agents who gets utility less than $T$. This agent envies agent $m+1$ even if one of the two items of agent $m+1$ is removed. Hence there is no UM+EF1 allocation. 
\end{proof}

\begin{remark}
\label{um-efx-stronghard}
{
The problem  {\sc ExistsUMandEFx} is strongly NP-complete: the proof of Theorem \ref{um-ef1-stronghard} holds as-is when EF1 is replaced by EFx, as both items allocated to agent $m+1$ have the same value.}

\end{remark}

As a corollary we get the following hardness results:
\begin{corollary}
\label{um-within-ef1-strongly-hard}
\label{um-within-efx-strongly-hard}
The problems \textsc{ComputeUMwithinEF1}
and \textsc{ComputeUMwithinEFx} are strongly NP-hard.
\end{corollary}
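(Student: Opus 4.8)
The plan is to derive the Corollary directly from Theorem~\ref{um-ef1-stronghard} (and Remark~\ref{um-efx-stronghard}) by a simple reduction between the decision problem \textsc{ExistsUMandEF1} and the optimization problem \textsc{ComputeUMwithinEF1}. The key observation is that an algorithm solving the optimization version can be used to answer the existence question, so strong NP-hardness of the former transfers to the latter. Specifically, suppose we have an algorithm for \textsc{ComputeUMwithinEF1}, which returns an EF1 allocation of highest welfare among all EF1 allocations. First I would compute the maximum possible utilitarian welfare $W^* = \sum_{j} \max_{i\in N} u_i(o_j)$, which is obtained by giving each item to an agent valuing it most; this takes polynomial time. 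Then I would run the assumed optimization algorithm to obtain an EF1 allocation $p$ maximizing welfare within $\mathcal{A}^{\text{EF1}}$.

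The crucial step is to compare the welfare of $p$ against $W^*$. A UM and EF1 allocation exists if and only if the maximum welfare attainable by an EF1 allocation equals the global maximum welfare $W^*$. Thus the output of \textsc{ComputeUMwithinEF1} has welfare exactly $W^*$ precisely when the instance is a ``yes'' instance of \textsc{ExistsUMandEF1}, and strictly less than $W^*$ otherwise. This gives a polynomial-time Turing reduction from \textsc{ExistsUMandEF1} to \textsc{ComputeUMwithinEF1}. Since Theorem~\ref{um-ef1-stronghard} establishes that \textsc{ExistsUMandEF1} is strongly NP-complete, it follows that \textsc{ComputeUMwithinEF1} is strongly NP-hard. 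The identical argument, invoking Remark~\ref{um-efx-stronghard} in place of Theorem~\ref{um-ef1-stronghard}, yields strong NP-hardness of \textsc{ComputeUMwithinEFx}.

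I would be careful to note that the reduction preserves the ``strong'' qualifier: the hardness instances produced in the proof of Theorem~\ref{um-ef1-stronghard} have values polynomially bounded in the input size (they are sums of the \textsc{3-Partition} integers and small multiples of $T$), so unary representation is preserved and the reduction runs in time polynomial in the unary-encoded input. Since \textsc{ComputeUMwithinEF1} is an optimization (function) problem rather than a decision problem, I would phrase the hardness as NP-hardness via a Turing reduction rather than many-one reduction, which is the appropriate notion here and suffices to rule out polynomial-time algorithms unless P${}={}$NP.

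I do not anticipate a serious obstacle, as the argument is a routine self-reduction; the only point requiring mild care is confirming that the welfare comparison threshold $W^*$ is efficiently computable and that the equivalence ``$p$ has welfare $W^*$ iff a UM+EF1 allocation exists'' is exactly the statement proved within Theorem~\ref{um-ef1-stronghard}. This equivalence is immediate from the definition of \emph{UM within F}: an allocation is both UM and EF1 exactly when it is EF1 and attains welfare $W^*$, which is what a maximizer within $\mathcal{A}^{\text{EF1}}$ certifies.
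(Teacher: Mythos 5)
Your proposal is correct and matches the paper's proof essentially exactly: both reduce \textsc{ExistsUMandEF1} to \textsc{ComputeUMwithinEF1} by computing the unconstrained maximum welfare (give each item to an agent valuing it most) and comparing it to the welfare returned by the assumed optimization algorithm, answering ``yes'' iff they coincide. Your additional remarks on preserving the ``strong'' qualifier and on using a Turing reduction are sound and slightly more careful than the paper's one-line treatment.
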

\begin{proof}
We present a polynomial-time one-to-one reduction from {\sc ExistsUMandEF1} (Theorem \ref{um-ef1-stronghard}) to \textsc{ComputeUMwithinEF1}. We use an algorithm for the latter problem and compute the value of the maximum utilitarian welfare within the set of EF1 allocations. Let $w_1$ be this maximum value. Let $w_0$ be the maximum utilitarian welfare without the restriction of being EF1. An allocation maximizing social welfare can be computed in linear time by giving each item to any agent who values it the most.
If $w_0=w_1$, we have a ``yes'' instance of {\sc ExistsUMandEF1}; If $w_0\ne w_1$,  we have a ``no'' instance.
The same proof holds for 
\textsc{ComputeUMwithinEFx}.
\end{proof}

The hardness of  \textsc{ComputeUMwithinEF1}  follows implicitly from Theorem 5 of \citet{BJK+19b}, while not mentioned explicitly there.

Next, we show that (weak) NP-hardness holds even for the case of three agents.

%
%
%
%

%
\begin{theorem}
\label{um-ef1-hard}
The problem {\sc ExistsUMandEF1} for three agents is NP-complete.
\end{theorem}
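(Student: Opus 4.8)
The plan is to prove NP-hardness by reducing from a problem that remains NP-hard for ``small'' numerical parameters, since we want weak (not strong) NP-hardness and are restricting to three agents. The natural candidate is the classic \textsc{Partition} problem: given a multiset of integers summing to $2S$, decide whether it can be split into two subsets each summing to $S$. The idea is to build an instance with three agents where two of them (say agents $1$ and $2$) play symmetric roles corresponding to the two sides of the partition, while the third agent (agent $3$) plays the role that agent $m+1$ plays in the proof of Theorem~\ref{um-ef1-stronghard}: it receives a small number of ``heavy'' items under any UM allocation, and the first two agents are forced to split the ``partition items'' among themselves in a way that determines both feasibility of UM and the EF1 condition.

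Concretely, I would let the first $k$ items correspond to the $k$ integers $a_1,\dots,a_k$ of the \textsc{Partition} instance, give them value $a_j$ for agents $1$ and $2$ and value $0$ for agent $3$, and then add two ``special'' items that are highly valued by agent $3$ and moderately valued by agents $1,2$, with the numbers chosen (exactly as in Theorem~\ref{um-ef1-stronghard}, scaled for two active agents) so that (i) in \emph{every} UM allocation the two special items go to agent $3$ and the partition items are split between agents $1$ and $2$, and (ii) the EF1 condition between each of agents $1,2$ and agent $3$ holds precisely when each of the two active agents receives partition-items of total value at least $S$, i.e.\ exactly $S$. That equality for both agents is equivalent to a valid equipartition, so a UM+EF1 allocation exists if and only if the \textsc{Partition} instance is a ``yes'' instance. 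I would present the construction in a small value table mirroring the one in Theorem~\ref{um-ef1-stronghard}, and keep the valuations normalized if possible so the result is as strong as the surrounding ones.

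The two forward implications are then routine: from a valid partition, put side $A$ with agent $1$, side $B$ with agent $2$, the special items with agent $3$, and verify EF1 by checking that removing one special item from agent $3$'s bundle kills the envy of agents $1$ and $2$ (and that agent $3$, valuing the partition items at $0$, envies no one). Conversely, any UM allocation forces the special-items-to-agent-$3$ structure, and if the partition items are \emph{not} split evenly then one of agents $1,2$ falls strictly below $S$ and still envies agent $3$ after removal of a single item, contradicting EF1; hence a UM+EF1 allocation yields an equipartition. Membership in NP is immediate since both UM and EF1 are checkable in polynomial time, as already noted.

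The main obstacle I anticipate is the arithmetic of choosing the special items' values so that all three constraints hold simultaneously: the UM allocation must be \emph{uniquely} characterized (special items to agent $3$, partition items to agents $1,2$), and the EF1 boundary must land exactly at the threshold $S$ rather than strictly above or below it, so that ``envy after removing one item'' is equivalent to ``received value $<S$.'' Getting a clean single item of value $S$ (rather than two items each worth $S$, which is what the three-partition proof used to make EFx work automatically) is the delicate point, because with two agents sharing the partition items the envy comparison is against agent $3$'s full special bundle minus one item, and I must ensure the removed item's value is tuned so the inequality flips exactly at $S$. I expect to resolve this by a careful choice of the two special-item values (possibly unequal), and I would double-check the EFx variant separately since, unlike in Theorem~\ref{um-ef1-stronghard}, equal-valued special items may no longer suffice.
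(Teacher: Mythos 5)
Your reduction is essentially the paper's: it too reduces from \textsc{Partition} via a three-agent gadget in which UM forces the heavy items to one agent and EF1 toward that agent forces each of the other two to reach exactly $S$ from the number-items (the paper's gadget additionally hands two medium items $e_1,e_2$ to the two partition agents, but that is a cosmetic difference). The arithmetic you flag as the delicate point resolves immediately: taking two special items each worth $S$ to agents $1,2$ and $2S$ to agent $3$ yields a normalized instance in which agent $3$'s bundle minus one item is worth exactly $S$ to the others, so the EF1 threshold lands precisely at $S$, and the equal-valued special items also give the EFx variant for free.
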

\begin{proof}
Membership in NP comes directly from Theorem \ref{um-ef1-stronghard}. To prove hardness we reduce from \textsc{Partition}, which is the following problem.

\pbDef{\textsc{Partition}}
{A multiset $\{a_1,\ldots,a_m\}$ of integers, whose sum is $2W$.}
{Is there a partition of the integers into two sets, where the sum in each set is $W$?}

Given an instance of \textsc{Partition}, define an instance of {\sc ExistsUMandEF1} with $m+4$ items: $m$ number-items $\{o_1, \ldots, o_m\}$ and $4$ extra-items $\{e_1, \ldots, e_4\}$. 
There are three agents with the following valuations.
\begin{center}
\scalebox{0.85}{
\begin{tabular}{c|ccccc|c}
\hline
Items: & $o_i$ (for $i\in[m]$) &$e_1$ &$e_2$ &$e_3$ &$e_4$ & sum \\ 
\hline 
Alice:  & $0$ & $W$ & $2W$ & $6W$ & $7W$ & $16W$ \\ 
\hline 
Bob, Chana:& $a_i$ & $3W$& $3W$& $4W$& $4W$& $16W$ \\ 
\hline 
\end{tabular} 
}
\end{center} 
An allocation is UM if-and-only-if
the extra-items $e_3,e_4$ are given to Alice, and the number-items plus $e_1,e_2$ go to Bob or Chana.

Suppose there is an equal partition of the numbers. Then, it is possible to give Bob and Chana a utility of $W$ each from the number-items plus a utility of $3W$ from $e_1,e_2$, for a total of $4W$. The other extra items can be given to Alice. Alice does not envy at all; Bob and Chana do not envy once $e_4$ is removed from Alice's bundle. Hence the allocation is EF1 and UM.

Conversely, suppose there is an EF1 and UM allocation. Bob and Chana value Alice's bundle at $8W$. Once a highest-valued item (for them) is removed from it, they value it at $4W$. Hence, each of them must get a bundle valued at $4W$, so each of them must get one of $\{e_1,e_2\}$ plus a utility of $W$ from the number-items. Hence,  there must be an equal partition of the numbers.
\end{proof}

\begin{remark}
\label{um-efx-hard}
{The problem  {\sc ExistsUMandEFx} is NP-complete for three agents: the proof of Theorem \ref{um-ef1-hard} holds as-is when EF1 is replaced by EFx, as both items allocated to Alice have the same value for Bob and Chana.}
\end{remark}

By arguments similar to  Corollary~\ref{um-within-ef1-strongly-hard},
Theorem~\ref{um-ef1-hard} implies the following hardness result  (which follows from Theorem 4 of \citet{BJK+19b}):

\begin{corollary}
\label{um-within-ef1-hard}
\label{um-within-efx-hard}
The problems \textsc{ComputeUMwithinEF1}  {and \textsc{ComputeUMwithinEFx}}
for three agents is NP-hard.
\end{corollary}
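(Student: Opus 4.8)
The plan is to establish Corollary~\ref{um-within-ef1-hard} by a reduction essentially identical in spirit to Corollary~\ref{um-within-ef1-strongly-hard}, but invoking the three-agent hardness result of Theorem~\ref{um-ef1-hard} instead of the strong-hardness result of Theorem~\ref{um-ef1-stronghard}. The core observation is that an algorithm solving \textsc{ComputeUMwithinEF1} lets us decide \textsc{ExistsUMandEF1} by a single value comparison, and this equivalence carries over verbatim to the three-agent setting.

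First I would compute $w_0$, the maximum utilitarian welfare over all allocations, which is obtained in linear time by assigning each item to an agent who values it most. Next I would run the assumed algorithm for \textsc{ComputeUMwithinEF1} on the same instance to obtain $w_1$, the maximum utilitarian welfare attainable within the set of EF1 allocations. Since every EF1 allocation is an allocation, we always have $w_1 \leq w_0$. The decisive step is the equivalence $w_0 = w_1$ if-and-only-if there exists an EF1 allocation whose welfare equals the unconstrained maximum, i.e. an allocation that is simultaneously UM and EF1. Thus the instance is a ``yes'' instance of \textsc{ExistsUMandEF1} exactly when $w_0 = w_1$, and a ``no'' instance otherwise. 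Because Theorem~\ref{um-ef1-hard} shows \textsc{ExistsUMandEF1} is NP-hard already for three agents, and the reduction preserves the number of agents, \textsc{ComputeUMwithinEF1} is NP-hard for three agents as well.

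For the EFx variant, I would observe that the identical argument applies with EF1 replaced by EFx throughout: Remark~\ref{um-efx-hard} supplies the NP-hardness of \textsc{ExistsUMandEFx} for three agents, and the same comparison of the unconstrained optimum $w_0$ against the EFx-constrained optimum $w_1'$ decides it. Hence \textsc{ComputeUMwithinEFx} is NP-hard for three agents too.

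I do not anticipate a genuine obstacle here, since the reduction is a one-line Turing reduction and all the real work was done in Theorem~\ref{um-ef1-hard} and Remark~\ref{um-efx-hard}. The only point requiring a moment of care is confirming that the reduction is polynomial-time and parameter-preserving: computing $w_0$ is linear, the single oracle call is permitted in a Turing reduction to a computation problem, and neither the number of agents nor the instance size grows, so the three-agent restriction is respected on both sides.
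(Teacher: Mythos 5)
Your proposal is correct and follows exactly the paper's own route: the paper derives this corollary ``by arguments similar to Corollary~\ref{um-within-ef1-strongly-hard}'', i.e.\ the same Turing reduction comparing the unconstrained welfare optimum $w_0$ with the EF1-constrained (resp.\ EFx-constrained) optimum $w_1$, applied to the three-agent hardness from Theorem~\ref{um-ef1-hard} and Remark~\ref{um-efx-hard}. Your write-up just makes explicit the details the paper leaves implicit; no gap.
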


\section{UM and PROP1}
\label{sec:um-prop1}
While EF1 implies PROP1, 
the results for EF1 in Section \ref{sec:um-ef1} do not imply analogous results for PROP1. This is because an algorithm for \textsc{ExistsUMandEF1} might return ``no'' on an instance which admits a UM and PROP1 allocation, and an algorithm for \textsc{ExistsUMandPROP1} might return ``yes'' on an instance which does not admit a UM and EF1 allocation.
Hence, we provide stand-alone proofs for the analogous results for PROP1.

\begin{theorem}
\label{um-prop1-stronghard}
The decision problem \textsc{ExistsUMandPROP1} is strongly NP-complete.
\end{theorem}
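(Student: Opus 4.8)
The plan is to mirror the strongly-NP-hardness proof of \thmref{um-ef1-stronghard}, again reducing from the strongly NP-hard \textsc{3-Partition} problem, but recalibrating the numeric values to compensate for the fact that PROP1 is strictly weaker than EF1. Membership in NP is immediate, since both UM and PROP1 are polynomial-time checkable. I would construct an instance with $m+1$ agents: $m$ \emph{number agents} and one \emph{dummy}. As in the EF1 proof, the $3m$ \emph{number items} are valued $a_j$ by each number agent and $0$ by the dummy, which forces every UM allocation to hand all number items to the number agents. The essential change is in the \emph{big items}: instead of two, I introduce $m+2$ big items, each worth exactly $T$ to every number agent and worth slightly more than $T$ to the dummy (concretely $2(m+1)T/(m+2)$, scaling all values by $m+2$ to keep them integral and the instance normalized). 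Since the dummy is then the unique maximizer for each big item while valuing every number item at $0$, every UM allocation gives all $m+2$ big items to the dummy and distributes the number items among the number agents.

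With this calibration, each number agent values the whole set $O$ at $mT+(m+2)T = 2(m+1)T$, so its proportional share is exactly $2T$. In a UM allocation a number agent $i$ receives only number items, summing to some value $v_i$, while all big items lie in the dummy's bundle. The single most valuable item that $i$ could add from outside its bundle is therefore a big item worth $T$ (every number item is worth less than $T/2<T$). Hence PROP1 holds for agent $i$ precisely when $v_i+T\ge 2T$, i.e. when $v_i\ge T$. The dummy, which receives its entire valued endowment, trivially satisfies PROP1.

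It remains to connect this to \textsc{3-Partition}. Because the number items sum to $mT$ and are split among the $m$ number agents, the condition ``$v_i\ge T$ for all $i$'' can hold only if $v_i=T$ for every agent. By the defining property of \textsc{3-Partition} inputs ($T/4<a_j<T/2$), a subset of number items sums to exactly $T$ iff it is a triplet; hence a UM allocation is PROP1 iff the number items partition into $m$ triplets each summing to $T$, that is, iff the \textsc{3-Partition} instance is a ``yes'' instance. Since the reduction is polynomial and preserves unary encodings, strong NP-completeness follows.

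The crux --- and the reason the EF1 argument does not transfer verbatim --- is the recalibration of the proportional share. In the EF1 instance a number agent's threshold sat only slightly above $T$ while a single big item is already worth $T$, so under PROP1 a deficient agent could always reach its share by borrowing one big item, making PROP1 trivially satisfiable irrespective of the partition. The main work is therefore to choose the number of big items and their values so that the share lands exactly one big-item-value above the partition target $T$ (namely at $2T$); this makes ``value at least $T$'' simultaneously necessary and sufficient for PROP1 and re-establishes the equivalence with \textsc{3-Partition}. Checking that the chosen values leave the instance normalized and the UM structure forced is then routine.
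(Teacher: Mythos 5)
Your proposal is correct and is essentially the paper's own proof: the same reduction from \textsc{3-Partition} with $m$ number agents plus one dummy, the same $m+2$ big items worth $T$ to the number agents and $\bigl(1+\tfrac{m}{m+2}\bigr)T = \tfrac{2(m+1)T}{m+2}$ to the dummy, and the same calibration of the proportional share to $2T$ so that PROP1 holds iff each number agent receives exactly $T$. Your extra remark about scaling by $m+2$ to keep values integral is a minor (and sensible) refinement the paper glosses over, but it does not change the argument.
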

\begin{proof}
The problem is in NP as both UM and PROP1 can be tested in polynomial time. To establish NP-hardness, 
we reduce from \textsc{3-Partition} as in Theorem \ref{um-ef1-stronghard}.
We construct an instance with $m+1$ agents and $4m+2$ items and the following valuations:
			 
\begin{center}
\scalebox{0.8}{
\begin{tabular}{c|cc|c}
\hline
Items: &$1,\ldots,3m$ &$3m+1,\ldots,~4m+2$ & Sum \\ 
\hline 
Agents $1,\ldots, m$:  & $v(o_j) = a_j$ & $T$ & $(2m+2) T$\\ 
\hline 
Agent $m+1$: & $0$ & $(1+\frac{m}{m+2})\cdot T$ & $(2m+2) T$ \\ 
\hline 
\end{tabular} 
}
\end{center} 
An allocation is UM if-and-only-if agent $m+1$ gets the items numbered $3m+1$ to $4m+2$.
			 
Suppose we have a ``yes'' instance of \textsc{3-Partition}.	Then, we can allocate the first $3m$ items among the first $m$ agents in a way that each agent gets utility $T$. We can allocate the remaining items to agent $m+1$. 
For the first $m$ agents,
the sum of valuations is $(m+1)\cdot 2T$ so their proportional share is $2 T$.
If they get one of the items numbered from $3m+1$ to $4m+2$, then they get additional utility of $T$ so that their total utility becomes $2T$. Hence, the allocation is PROP1.   

Now suppose that we have a ``no'' instance of \textsc{3-Partition}.  Since there is no equi-partition of the $3m$ elements, at least one agent among the first $m$ agents gets utility less than $T$. This agent does not get utility $2T$ even when adding one of the items numbered from $3m+1$ to $4m+2$. Note that every other item has value less than $T$. Hence there is no UM+PROP1 allocation. 
\end{proof}

\begin{remark}
\label{um-propx-stronghard}
{
The problem  {\sc ExistsUMandPROPx} is strongly NP-complete: the proof of Theorem \ref{um-prop1-hard} holds as-is when PROP1 is replaced by PROPx, as all items allocated to agent $m+1$ have the same value.}

\end{remark}

\begin{corollary}
\label{um-within-prop1-strongly-hard}
\label{um-within-propx-strongly-hard}
The problems 
\textsc{ComputeUMwithinPROP1} 
and
\textsc{ComputeUMwithinPROPx} 
are 
strongly NP-hard.
\end{corollary}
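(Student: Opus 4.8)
The plan is to mimic exactly the reduction used in Corollary~\ref{um-within-ef1-strongly-hard}, replacing the appeal to \textsc{ExistsUMandEF1} with an appeal to \textsc{ExistsUMandPROP1} (Theorem~\ref{um-prop1-stronghard}) and its PROPx counterpart (Remark~\ref{um-propx-stronghard}). The key observation is that strong NP-hardness of a computation problem follows from strong NP-hardness of the corresponding existence problem via a simple Turing (indeed one-to-one) reduction that compares two optimal welfare values.

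Concretely, I would argue as follows. Suppose we have a polynomial-time algorithm for \textsc{ComputeUMwithinPROP1}; run it to obtain the maximum utilitarian welfare $w_1$ attainable within the set of PROP1 allocations. Separately compute $w_0$, the unconstrained maximum utilitarian welfare, which can be found in linear time by assigning each item to an agent who values it most. Then an allocation that is simultaneously UM and PROP1 exists if and only if $w_0=w_1$: if the welfare-maximizing PROP1 allocation already attains the global maximum $w_0$, it witnesses a ``yes'' instance of \textsc{ExistsUMandPROP1}, and conversely any UM+PROP1 allocation achieves welfare $w_0$ while being PROP1, forcing $w_1=w_0$. Hence the reduction is correct, and since the instance sizes are preserved and all numbers are copied verbatim, the reduction is polynomial and keeps numbers in unary, so strong NP-hardness of \textsc{ExistsUMandPROP1} transfers to \textsc{ComputeUMwithinPROP1}. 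The identical argument, substituting PROPx for PROP1 throughout and invoking Remark~\ref{um-propx-stronghard}, establishes strong NP-hardness of \textsc{ComputeUMwithinPROPx}.

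There is essentially no obstacle here, since this is a routine reduction of the same shape already carried out in Corollary~\ref{um-within-ef1-strongly-hard}; the only point requiring a moment of care is the equivalence $w_0=w_1 \iff$ a UM+PROP1 allocation exists, which rests on the trivial facts that UM means attaining welfare $w_0$ and that $w_1\le w_0$ always. I would simply state that the proof is analogous to that of Corollary~\ref{um-within-ef1-strongly-hard}, using Theorem~\ref{um-prop1-stronghard} and Remark~\ref{um-propx-stronghard} in place of Theorem~\ref{um-ef1-stronghard} and Remark~\ref{um-efx-stronghard}, and briefly spell out the $w_0$ versus $w_1$ comparison to make the corollary self-contained.
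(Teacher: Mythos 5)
Your proposal is correct and is exactly the argument the paper intends: the corollary is stated without an explicit proof precisely because it follows by the same $w_0$-versus-$w_1$ comparison reduction given for Corollary~\ref{um-within-ef1-strongly-hard}, now invoking Theorem~\ref{um-prop1-stronghard} and Remark~\ref{um-propx-stronghard} in place of the EF1/EFx results. Nothing is missing.
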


%

Weak NP-hardness persists even for three agents. 
	
\begin{theorem}
\label{um-prop1-hard}
For three agents, 
the decision problem \textsc{ExistsUMandPROP1} 
is NP-complete. 
\end{theorem}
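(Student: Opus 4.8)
The plan is to establish membership in NP exactly as in Theorem~\ref{um-prop1-stronghard} (both UM and PROP1 are checkable in polynomial time) and to prove hardness by reducing from \textsc{Partition}, which is weakly NP-hard. Given a \textsc{Partition} instance $\{a_1,\dots,a_m\}$ with $\sum_i a_i = 2W$, I would first note that we may assume $a_i \le W$ for every $i$: any element exceeding $W$ forces a trivial ``no'' answer, so such instances can be mapped to a fixed negative instance. I then build a three-agent instance on Alice, Bob, Chana, using $m$ \emph{number-items} and $4$ \emph{big-items}, with Bob and Chana playing the role of the ``numbered'' agents who must each collect a fair share of the numbers, and Alice absorbing the big-items.

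Concretely, I would set the valuations to be integral and normalized (each agent's total equal to $12W$) as follows: each number-item $o_i$ is worth $0$ to Alice and $2a_i$ to Bob and Chana; each big-item is worth $3W$ to Alice and $2W$ to Bob and Chana. Since $3W>2W$ and $2a_i>0$, an allocation is UM if and only if Alice receives all four big-items and every number-item goes to Bob or Chana, the split of the numbers between Bob and Chana being free. The key numerical fact is that the proportional share of each agent is exactly $12W/3=4W$, equal to the ``fair'' numbers-share $2W$ \emph{plus} the value $2W$ of a single big-item.

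Correctness then splits into the two standard directions. In a ``yes'' instance I would split the numbers so that Bob and Chana each obtain value $2W$; Alice then holds value $12W\ge 4W$, and each of Bob and Chana has value $2W$, which reaches the threshold $4W$ after borrowing one big-item worth $2W$, so this UM allocation is PROP1. In a ``no'' instance, every UM allocation hands the big-items to Alice and splits $4W$ of number-value between Bob and Chana; since no equal split exists and all number-values are even, one of the two gets at most $2W-2$. Because $a_i\le W$ (indeed $a_i<W$ in a ``no'' instance), the most valuable single item any agent can borrow is a big-item worth $2W$, so that agent's best PROP1 bundle is at most $(2W-2)+2W=4W-2<4W$. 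Hence no UM allocation is PROP1, and the reduction is complete.

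I expect the main obstacle to be calibrating the values so that three requirements hold at once: (i) the proportional share equals the fair numbers-share plus the value of exactly one big-item, making PROP1 \emph{tight} so that a deficit in the numbers cannot be repaired by a single borrowed item; (ii) the big-items are the most valuable items Bob and Chana can borrow, which requires bounding the numbers relative to a big-item --- this is automatic for the bounded-element \textsc{3-Partition} used in Theorem~\ref{um-prop1-stronghard} but is the delicate point when reducing from \textsc{Partition}; and (iii) the instance is normalized. Solving these constraints simultaneously is what pins down the number of big-items to $4$ and forces the particular values above. A useful sanity check along the way is that Alice is never the fairness bottleneck: she always receives the full $12W$ and is even PROP, so the entire argument reduces to whether the numbers can be balanced between Bob and Chana.
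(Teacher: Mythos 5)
Your proof is correct and follows essentially the same strategy as the paper's: a reduction from \textsc{Partition} in which Alice absorbs the high-value items in every UM allocation, and Bob and Chana can each reach their proportional share (after borrowing one of Alice's big-items) if and only if the numbers split evenly, with integrality supplying the gap in the ``no'' direction. The only difference is gadget calibration --- the paper uses six extra items of two types, two of which must go to Bob and Chana in any UM allocation, whereas you use four identical big-items and double the $a_i$ to create the parity gap --- but both versions go through.
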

\begin{proof}
The reduction is similar to Theorem \ref{um-ef1-hard}, only with $6$ extra-items and the following valuations:
\begin{center}
\scalebox{0.85}{
\begin{tabular}{c|c|cc|cccc|c}
\hline
& $o_i$ & $e_1$ & $e_2$ &$e_3$ &$e_4$ &$e_5$ &$e_6$ & sum \\ 
\hline 
A: & $0$ & $2W$ & $2W$ & $5W$ & $5W$ & $5W$ & $5W$ & $24W$ \\ 
\hline 
B,C: & $a_i$ & $3W$& $3W$& $4W$& $4W$& $4W$& $4W$& $24W$ \\ 
\hline 
\end{tabular} 
}
\end{center} 
An allocation is UM if-and-only-if Alice gets all and only the items $e_3,\ldots,e_6$. Any such allocation is PROP for Alice.

Bob and Chana value the set of all items at $24 W$, so their proportional share is $8 W$. 
If there is an equal partition of the numbers, then it is possible to give Bob and Chana a bundle worth $4 W$ each, which is proportional after adding to it one of Alice's items.
Conversely, if there is a PROP1 allocation then Bob and Chana's value must be at least $4 W$, so each of them must get at least $W$ from the number-items, so there is an equal partition.
\end{proof}

%

\begin{remark}
\label{um-propx-hard}
{The problem  {\sc ExistsUMandPROPx} is NP-complete for three agents: the proof of Theorem \ref{um-prop1-hard} holds as-is when EF1 is replaced by EFx, as all items allocated to Alice have the same value for Bob and Chana.}
\end{remark}

\begin{corollary}
\label{um-within-prop1-hard}
\label{um-within-propx-hard}
For three agents, the problems \textsc{ComputeUMwithinPROP1} 
{and
\textsc{ComputeUMwithinPROPx}}
are NP-hard. 
\end{corollary}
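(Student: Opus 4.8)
The plan is to mimic the Turing reduction used in Corollary~\ref{um-within-ef1-strongly-hard}, this time starting from the three-agent decision problem \textsc{ExistsUMandPROP1}, which is NP-complete by Theorem~\ref{um-prop1-hard}. Given an instance of \textsc{ExistsUMandPROP1} with three agents, I would invoke a hypothetical polynomial-time algorithm for \textsc{ComputeUMwithinPROP1} to obtain the maximum utilitarian welfare $w_1$ attainable among PROP1 allocations. Separately, I would compute the unconstrained maximum welfare $w_0$ in linear time by assigning each item to an agent who values it most.

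The key observation is that a UM and PROP1 allocation exists if and only if $w_0 = w_1$. Every PROP1 allocation has welfare at most $w_0$, so whenever the optimal PROP1-welfare $w_1$ equals $w_0$, the allocation witnessing $w_1$ is itself UM and PROP1. Conversely, if some allocation is simultaneously UM and PROP1, its welfare is exactly $w_0$ while it lies in $\mathcal{A}^{\text{PROP1}}$, forcing $w_1 = w_0$. Hence the test ``$w_0 = w_1$?'' decides \textsc{ExistsUMandPROP1} in polynomial time given the optimization oracle, which establishes NP-hardness of \textsc{ComputeUMwithinPROP1} for three agents.

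The same reduction applies verbatim to \textsc{ComputeUMwithinPROPx}: one replaces PROP1 by PROPx throughout and starts instead from \textsc{ExistsUMandPROPx}, which is NP-complete for three agents by Remark~\ref{um-propx-hard}. I anticipate no genuine obstacle here, since the argument is a routine one-to-one reduction from an already-established NP-complete decision problem and parallels Corollary~\ref{um-within-ef1-strongly-hard} exactly; the only point worth a sentence of care is verifying the equivalence $w_0 = w_1 \iff$ a UM+PROP1 allocation exists, which is immediate from the definition of \emph{UM within} a fairness class.
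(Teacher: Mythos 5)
Your argument is correct and is exactly the reduction the paper intends: Corollary~\ref{um-within-prop1-hard} is obtained from Theorem~\ref{um-prop1-hard} (and Remark~\ref{um-propx-hard}) by the same comparison of the constrained optimum $w_1$ with the unconstrained optimum $w_0$ used in Corollary~\ref{um-within-ef1-strongly-hard}. The equivalence $w_0=w_1$ iff a UM and fair allocation exists is verified correctly, so nothing is missing.
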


\section{UM and Fairness for Two Agents}
\label{sec:2-agents}
In this section, we consider the case of two agents. Many fair division problems arise between two parties so it is an important special case to consider. 
	
\begin{theorem}
\label{2agents-ef1-um}
For two agents, there exists a polynomial-time algorithm that solves 
\textsc{ExistsUMandEF1}.
\end{theorem}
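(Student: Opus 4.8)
The plan is to first pin down the structure of UM allocations and then reduce the problem to a one-dimensional balancing question about the ``tie'' items. With two agents and additive utilities, an allocation is UM if and only if every item goes to an agent who values it (weakly) most. Partition the items into $A=\{o: u_1(o)>u_2(o)\}$, $B=\{o: u_1(o)<u_2(o)\}$ and $C=\{o: u_1(o)=u_2(o)\}$. In every UM allocation agent $1$ receives all of $A$ and agent $2$ receives all of $B$, while the items of $C$ may be split arbitrarily; write $w(o)=u_1(o)=u_2(o)$ for $o\in C$. Thus the only freedom is a partition $C=C_1\sqcup C_2$, and the task is to decide whether some such partition makes the resulting allocation EF1 (if $C=\emptyset$ the UM allocation is unique and EF1 is checked directly).

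Next I would rewrite EF1 as two numerical bounds. Writing $U_i=u_i(O)$ and using $u_1(p(2))=U_1-u_1(p(1))$, the EF1 condition for agent $1$ becomes $u_1(p(1))\ge \tfrac12\bigl(U_1-m_1\bigr)$, where $m_1=\max_{o\in p(2)}u_1(o)$, and symmetrically agent $2$ requires $u_2(p(2))\ge\tfrac12\bigl(U_2-m_2\bigr)$ with $m_2=\max_{o\in p(1)}u_2(o)$. Setting $t:=\sum_{o\in C_1}w(o)$, so that $u_1(p(1))=u_1(A)+t$ and $u_2(p(2))=u_2(B)+(W-t)$ where $W=\sum_{o\in C}w(o)$, these two conditions turn into a lower and an upper bound on $t$, namely $t\ge\tfrac12\bigl(u_1(B)+W-u_1(A)-m_1\bigr)$ and $t\le\tfrac12\bigl(u_2(B)+W-u_2(A)+m_2\bigr)$. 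Crucially, $m_1$ depends on $C_1$ only through the largest tie item left with agent $2$, and $m_2$ only through the largest tie item given to agent $1$.

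The engine of the algorithm is a monotone trade-off: moving any positive-value tie item from agent $2$ to agent $1$ weakly increases agent $1$'s EF1 slack and weakly decreases agent $2$'s (the gain of $2w(o)$ in the balance term dominates the change in the removed-item term, which moves by at most $w(o)$). Hence the partitions satisfying agent $1$ form an upward-closed family, and those satisfying agent $2$ a downward-closed family, under inclusion of $C_1$. I would exploit this together with the fact that $m_1,m_2$ are controlled by just the largest tie item on each side: enumerate the placement of the largest (and, if needed, the next few largest) tie items, which fixes $m_1$ and $m_2$ and hence turns both EF1 requirements into fixed numeric bounds $t\in[L,R]$; then try to realize a value of $t$ in $[L,R]$ by a greedy sweep that adds the remaining (smaller) tie items in decreasing order. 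Each of the polynomially many candidate partitions is tested for EF1 directly, and the algorithm answers ``yes'' iff one of them is EF1.

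The main obstacle is completeness, i.e.\ proving that this polynomial family of candidate partitions always contains an EF1 allocation whenever one exists. The difficulty is the coupling between the balance parameter $t$ and the EF1 ``remove one item'' terms $m_1,m_2$: naively, hitting the target interval $[L,R]$ with a subset sum $t=\sum_{o\in C_1}w(o)$ looks like a \textsc{Subset-Sum} obstruction. I expect this to be resolved by combining the one-item slack with the structural inequalities forced by the $A/B/C$ partition ($u_1(A)>u_2(A)\ge\max_{o\in A}u_2(o)$ and $u_2(B)>u_1(B)\ge\max_{o\in B}u_1(o)$): once the largest tie items are placed so as to maximize the relevant removed-item term, the interval $[L,R]$ is wide enough, and correctly aligned, that the decreasing-order greedy sweep is guaranteed to land inside it. Making this width-and-alignment argument precise --- and thereby ruling out a genuine subset-sum hardness --- is the crux of the proof.
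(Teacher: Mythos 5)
Your reduction of the problem to the structure of UM allocations (forced items where the valuations differ, free ``tie'' items $C$) is exactly the paper's starting point, and your monotonicity observation (moving a tie item to an agent changes that agent's EF1 slack by $2w(o)$ minus a removed-item correction of at most $w(o)$, hence weakly helps them) is correct. But the proof as written has a genuine, and self-acknowledged, gap: you never establish that your polynomial family of candidate partitions is complete. Fixing $m_1,m_2$ by enumerating the placement of the largest tie items and then asking for a subset of the remaining tie items whose sum lands in $[L,R]$ is precisely a subset-sum-shaped question, and your claim that the interval is ``wide enough and correctly aligned'' for a decreasing-order greedy sweep to hit it is asserted, not proved. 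The alignment is also circular: the width of $[L,R]$ is controlled by $m_1$ and $m_2$, which are themselves maxima over sets that include the tie items you are still placing. Since you explicitly flag this as ``the crux of the proof,'' the argument is incomplete as it stands.

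The paper's proof sidesteps the interval/subset-sum framing entirely with a one-pass greedy and an invariant. After the forced assignment, it processes the tie items in arbitrary order and gives each one to whichever agent is currently envious (in a UM allocation with two agents at most one agent can be envious, else swapping bundles would raise welfare). Two cases then close the argument. If the envy direction never flips, all tie items go to one agent, say Alice; this is the UM allocation most favourable to Alice, so the instance is a ``yes'' iff this single allocation is EF1 (the other agent is automatically EF1, having been non-envious until the last item). If the direction does flip, then at the moment of the flip the newly envious agent's envy is witnessed by exactly the last item given, and this ``envy bounded by the last tie item'' invariant is maintained for the rest of the loop, so the final allocation is UM and EF1. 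Your monotone trade-off is essentially the reason this greedy works, but the invariant formulation is what turns it into a complete correctness proof; I would encourage you to replace the interval-hitting plan with that invariant argument.
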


\begin{proof}
For each item $o\in O$, we denote Alice's utility by $u(o)$ and Bob's utility by $u(o)+d(o)$. We denote by $O_{eq}$ the set of items for which both agents have the same utility, i.e., $d(o)=0$. We claim that
Algorithm~\ref{alg:um-ef1}
finds a UM and EF1 allocation if-and-only-if such allocation exists.


\begin{algorithm}
	\caption{
	\label{alg:um-ef1}
	Finding a UM and EF1 allocation if one exists; two agents.
	}
	\begin{algorithmic}[1]
	\small
	\STATE Give all items with $d(o)>0$ to Bob and $d(o)<0$ to Alice.
	\FOR {each item $o\in O_{eq}$} \label{line:startloop} 
	\IF {one of the agents is envious}
	\STATE Give $o$ to him/her;
	\ELSE 
	\STATE Give $o$ to an arbitrary agent.
	\ENDIF
	\ENDFOR \label{line:endloop} 
	\IF{the allocation is EF1} \label{line:check-if-fair} 
	\RETURN \label{line:um-ef1-yes} the allocation and say yes
	\ELSE
	\RETURN no
	\ENDIF
		\end{algorithmic}
\end{algorithm}

It is easy to see that an allocation is UM if-and-only-if every item with $d(o)>0$ is given to Bob and every item with $d(0)<0$ is given to Alice. Therefore, line 1 is necessary and sufficient for guaranteeing that the final allocation is UM, regardless of how the remaining items are allocated.

In every UM allocation for two agents, at most one agent is envious --- otherwise the utilitarian welfare could be increased by exchanging the bundles. 
Therefore, throughout the loop in lines \ref{line:startloop}--\ref{line:endloop}, at most one agent is envious. 

We now consider two cases.
First, suppose that the loop in lines \ref{line:startloop}--\ref{line:endloop} gives all items in $O_{eq}$ to a single agent, say Alice. This means that Bob was not envious before the last item was given, so the allocation is EF1 for Bob. 
If the allocation is EF1 for Alice too, then the algorithm says ``yes'' correctly.
Otherwise, \emph{no} allocation is UM and EF1,
as this is the allocation that 
gives Alice the highest possible value among the UM allocations. Then the algorithm says ``no'' correctly.

The second case is that the loop switches from the case ``Alice envies'' to the case ``Bob envies''.
So there is an item $o$ such that, without $o$, Bob does not envy Alice, but once $o$ is given to Alice, Bob envies her. 
At this point, the allocation is EF for Alice and EF1 for Bob. From here, at most one agent envies, and the envy level remains at most one item. Therefore, when the algorithm ends at line \ref{line:um-ef1-yes}, the allocation is UM and EF1.
\end{proof}

Algorithm \ref{alg:um-ef1} 
does not directly solve 
\textsc{ExistsUMandPROP1}
because,
even for 2 agents, EF1 is strictly stronger than PROP1 (see Appendix \ref{sec:ef1-prop1}). 
Therefore, Algorithm \ref{alg:um-ef1} may return ``no'' even though a UM and PROP1 allocation exists.
However, a very similar algorithm can handle UM and PROP1.

\begin{theorem}
\label{2agents-prop1-um}
For two agents, there exists a polynomial-time algorithm that solves 
\textsc{ExistsUMandPROP1}.
\end{theorem}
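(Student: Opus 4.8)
The plan is to mimic the structure of the proof of Theorem~\ref{2agents-ef1-um}: first reduce to distributing the ``equal-value'' items, then replace the final EF1 test by a PROP1 test, but with a more careful sweep over the possible splits. As in that proof, any UM allocation is forced to give every item with $d(o)>0$ to Bob and every item with $d(o)<0$ to Alice, so the only freedom is in partitioning $O_{eq}$. Two easy reductions narrow the work. First, since EF1 implies PROP1, whenever Algorithm~\ref{alg:um-ef1} already returns a UM and EF1 allocation, that allocation is also UM and PROP1, so I only need to worry about instances on which it fails. Second, in any (complete) UM allocation at most one agent is non-PROP: if both valued their own bundle below half, each would strictly prefer the other's bundle, and swapping the two bundles would strictly increase the utilitarian welfare, contradicting UM. Hence at least one agent is always PROP, therefore PROP1, and the whole question is whether the single ``deprived'' agent can be made PROP1.

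Next I would establish a monotonicity lemma for the PROP1 margin. Define, for an agent $i$ holding bundle $p(i)$, the quantity $u_i(p(i)) + \max_{o \notin p(i)} u_i(o) - u_i(O)/2$. A short case analysis on the ``best missing item'' term shows that moving one item of $O_{eq}$ from one agent to the other weakly increases the recipient's margin and weakly decreases the donor's. Consequently the set of splits for which Alice is PROP1 is upward-closed, and the set for which Bob is PROP1 is downward-closed, in the lattice of subsets of $O_{eq}$ assigned to Alice; the task becomes deciding whether these two monotone families intersect.

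The algorithm I propose is the natural sweep suggested by this monotonicity: start from the extreme allocation produced by the balancing loop of Algorithm~\ref{alg:um-ef1} and, as long as the deprived agent fails PROP1, transfer the items of $O_{eq}$ one at a time (in a fixed, e.g.\ sorted, order) from the satisfied agent to the deprived one, testing PROP1 after each transfer; answer ``yes'' as soon as some split makes both agents PROP1, and ``no'' if the sweep is exhausted. This examines only $O(m)$ allocations, each tested in linear time, so the running time is clearly polynomial.

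The hard part is proving completeness of the sweep, and it is here that PROP1 genuinely differs from EF1. Because the PROP1 margin depends not only on the total value transferred to an agent but also on \emph{which} items are transferred---through the single ``best missing item''---and because the two agents would like the large items of $O_{eq}$ to sit on opposite sides, one cannot simply reduce to a threshold on the transferred total. Underlying the difficulty is a subset-sum obstacle: the values of $O_{eq}$ that simultaneously satisfy Alice's lower bound and Bob's upper bound form an interval, and I must land an achievable subset-sum of $O_{eq}$ inside it. The key observation that makes this tractable is that the ``up to one item'' slack contributes a full item's worth to each agent's margin, which forces this interval to be wide enough that a greedy sorted transfer cannot step over it. I expect the main technical step to be a single-item exchange argument turning an arbitrary feasible split into one of the sorted splits examined by the sweep without destroying either agent's PROP1 guarantee, thereby certifying that the sweep finds a UM and PROP1 allocation whenever one exists.
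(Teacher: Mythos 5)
Your reduction to distributing $O_{eq}$, your observation that at most one agent can be non-proportional in a UM allocation, and your monotonicity lemma for the PROP1 margin are all correct, and the algorithm you describe does return the right answer. But the proof as written has a genuine gap: you explicitly leave the completeness of the sweep unproven (``I expect the main technical step to be a single-item exchange argument\ldots''), and the worry you raise is real in the abstract. A single chain through the lattice of subsets of $O_{eq}$ need not pass through the intersection of an upward-closed family and a downward-closed family: take the sets containing a fixed element $a$ (upward-closed) and the sets avoiding a fixed element $b$ (downward-closed); a chain that adds $b$ before $a$ misses their nonempty intersection entirely. So the claim that the ``up to one item'' slack makes the target interval wide enough to be unavoidable cannot be waved at --- it would require an actual argument, which you do not supply.

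What you miss is that no sweep is needed at all. The only case in which the balancing loop of Algorithm~\ref{alg:um-ef1} fails to output an EF1 allocation is the case in which it gives \emph{all} of $O_{eq}$ to one agent, say Alice (if the identity of the envious agent ever switches, the output is EF1 and hence PROP1, exactly as in the proof of Theorem~\ref{2agents-ef1-um}). In that case Bob received no item of $O_{eq}$, so he was never envious during the loop, so the final allocation is EF1 and hence PROP1 for him; and Alice holds the inclusion-wise largest bundle she can hold in any UM allocation. By your own monotonicity lemma, her PROP1 margin is therefore maximal over all UM allocations, so a UM and PROP1 allocation exists if and only if this one allocation is PROP1. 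This is precisely the paper's proof: run the same loop and replace the final EF1 test by a PROP1 test. In your terms, the ``deprived'' agent already owns every transferable item, so your sweep would never move anything; once you notice this, your argument closes and coincides with the paper's.
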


\begin{proof}
We can use an algorithm almost identical to Algorithm~\ref{alg:um-ef1}, 
except for line \ref{line:check-if-fair} which should read ``if the allocation is PROP1 then''.

Similarly to the previous proof, we consider two cases. 
If the algorithm allocates all items in $O_{eq}$ to a single agent (say Alice), then this is the largest possible value Alice can get in a UM allocation, so a UM-and-PROP1 allocation exists if-and-only-if this final allocation is PROP1.

Otherwise, the allocation switches between ``Alice envies'' and ``Bob envies''. Once this switch happens, the allocation is EF1 and it remains EF1 until the end, so the algorithm finds a UM+EF1 allocation, which is also UM+PROP1.
\end{proof}

\begin{remark}
\label{positive-2agents-eq1-um}
Algorithm~\ref{alg:um-ef1} can be  adapted to solve \textsc{ExistsUMandEQ1}.
Let us change line 3 to ``if one of the agents has a smaller utility than the other agent'', and change line 9 to ``if the allocation is EQ1 then''.

The proof argument is similar to Theorem \ref{2agents-prop1-um}. If the algorithm allocates all items in $O_{eq}$ to a single agent (say Alice), then this is the largest possible value Alice can get in a UM allocation, so a UM-and-EQ1 allocation exists if-and-only-if this final allocation is EQ1.

Otherwise, the allocation switches from ``Alice's utility is smaller'' to ``Bob's utility is smaller''. Once this switch happens, the allocation is EQ1 and it remains EQ1 until the end, so the algorithm finds a UM+EQ1 allocation.
\end{remark}

{
In contrast to the above positive results, we get a  hardness result if we replace EF1 with EFx.
}
\begin{theorem}
\label{2agents-efx-um}
{
The decision problem \textsc{ExistsUMandEFx} 
is NP-complete even for two agents.
}
\end{theorem}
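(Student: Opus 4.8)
The plan is to first dispatch membership in NP, which is immediate: given an allocation we can check UM by comparing its welfare against the value obtained by giving each item to a highest-valuer, and check EFx by testing $u_i(p(i)) \ge u_i(p(j)\setminus\{o\})$ for every ordered pair $i,j$ and every $o\in p(j)$. For hardness I would reduce from \textsc{Partition}: given positive integers $a_1,\dots,a_m$ with $\sum_i a_i = 2W$, decide whether some subset sums to $W$. The guiding observation, inherited from the structure used in Theorem~\ref{2agents-ef1-um}, is that for two agents a UM allocation is \emph{forced} on every item with $d(o)\neq 0$ and \emph{completely free} on the equal-value items $O_{eq}$. So I would place the \textsc{Partition} numbers inside $O_{eq}$, leaving the only UM freedom to be how their total is split between the two agents.

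Concretely I would build a two-agent instance with the $m$ number-items $o_1,\dots,o_m$ in $O_{eq}$ (Alice and Bob both value $o_i$ at $a_i$), plus two extra items: $p$ with $u_A(p)=1,\ u_B(p)=0$ and $q$ with $u_A(q)=0,\ u_B(q)=1$. Since $d(p)<0$ and $d(q)>0$, every UM allocation must hand $p$ to Alice and $q$ to Bob, while the number-items may be split arbitrarily; hence \emph{any} allocation of this shape is UM, and the instance is normalized (each agent values the whole at $2W+1$). Writing $x$ for the $a$-value Alice receives from $O_{eq}$, Alice's utility is $1+x$ and Bob's is $1+(2W-x)$, Alice values Bob's bundle at $2W-x$, and Bob values Alice's bundle at $x$.

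The heart of the argument is the EFx analysis, which exploits that a \emph{zero-valued} item in the opponent's bundle collapses EFx into full envy-freeness for that agent. Since $q$ lies in Bob's bundle with $u_A(q)=0$, the binding EFx constraint for Alice (removing the least-valued item, namely $q$) is exactly $1+x\ge 2W-x$, i.e. $x\ge W-\tfrac12$; symmetrically, $p$ in Alice's bundle with $u_B(p)=0$ forces $1+(2W-x)\ge x$, i.e. $x\le W+\tfrac12$. Because $x$ is an integer, the two inequalities together are equivalent to $x=W$. Hence a UM+EFx allocation exists if and only if the number-items can be split so that Alice's share sums to exactly $W$, i.e. if and only if the \textsc{Partition} instance is a ``yes''. (At $x=W$ the allocation is in fact envy-free, hence certainly EFx; for any $x\neq W$ one of the two squeezing inequalities fails, so no UM allocation is EFx.)

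The step I expect to carry the real weight — and the reason EFx is hard here while EF1 is polynomial (Theorem~\ref{2agents-ef1-um}) — is forcing an \emph{exact} balance rather than a merely bounded one. EF1 deletes the \emph{most} valuable item and therefore tolerates an interval of splits, which is exactly what lets the greedy algorithm succeed; EFx deletes the \emph{least} valuable item, and by planting the two value-$0$ items $p,q$ I can render that deletion useless, upgrading EFx into exact envy-freeness and thus into an exact subset-sum requirement. The two points to check with care are that this zero-item mechanism is legitimate under the paper's definition of EFx, which quantifies over \emph{all} cardinality-one subsets $O'$ (including zero-valued items), and that the integrality of the $a_i$ is what turns the half-integer slack $[W-\tfrac12,\,W+\tfrac12]$ into the single admissible value $x=W$.
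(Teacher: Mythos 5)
Your proof is correct and follows essentially the same route as the paper's: a reduction from \textsc{Partition} with two crossed extra items whose assignment is forced by UM, after which the EFx condition pins the number-items to an exactly balanced split. The only difference is in the gadget --- the paper scales the numbers by $10$ and gives the extra items values $\{1,2\}$ so that the one-unit EFx slack cannot be absorbed by the number-items, whereas you plant zero-valued crossed items to collapse EFx into exact envy-freeness --- but both versions close with the same integrality argument.
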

\begin{proof}
By reduction from \textsc{Partition}.
Given an instance of \textsc{Partition} with $m$ integers, define an instance of {\sc ExistsUMandEFx} with $m+2$ items: $m$ number-items $\{o_1, \ldots, o_m\}$ and $2$ extra-items $\{e_1, e_2\}$. 
There are two agents with the following valuations.
\begin{center}
\scalebox{0.85}{
\begin{tabular}{c|ccc|c}
\hline
Items: & $o_i$ (for $i\in[m]$) &$e_1$ &$e_2$ & sum \\ 
\hline 
Alice:  & $10 a_i$ & $2$ & $1$ &  $20W+3$ \\ 
\hline 
Bob   &  $10 a_i$ & $1$& $2$&  $20W+3$ \\ 
\hline 
\end{tabular} 
}
\end{center} 
An allocation is UM if-and-only-if $e_1$ is given to Alice and $e_2$ is given to Bob (regardless of how the number-items are allocated).

If there is an equal partition of the integers, then there exists an allocation in which each agent gets exactly the same value ($10W$) from the number-items. 
So each agent values his/her own bundle at $10W+2$ and the other agent's bundle at $10W+1$. This allocation is EF and therefore EFx.

For the other direction, suppose that there is a UM+EFx allocation of the items. For both agents, the least valuable item in the other agent's bundle has a value of 1. Therefore, each agent must value his/her own bundle at least as much as the other agent's bundle minus 1. Since the values of the number-items are all multiples of 10, both agents must get exactly the same value from these items, so there exists an equal partition of the integers.
\end{proof}

\begin{remark}
\label{2agents-propx-um}
{
The same reduction shows that the problem  \textsc{ExistsUMandPROPx} is NP-complete for two agents.
}
\end{remark}

\begin{corollary}
\label{2agents-um-within-efx}
\label{2agents-um-within-propx}
{
The optimization problems \textsc{ComputeUMwithinEFx} 
and
\textsc{ComputeUMwithinPROPx} 
are NP-hard even for two agents.
}
\end{corollary}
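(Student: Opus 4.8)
The plan is to reuse the decision-to-optimization reduction already employed in Corollary~\ref{um-within-ef1-strongly-hard}, now anchored on the two-agent hardness results proved in this section. First I would recall that \textsc{ExistsUMandEFx} is NP-complete for two agents (Theorem~\ref{2agents-efx-um}) and reduce \emph{this} decision problem to the optimization problem \textsc{ComputeUMwithinEFx}. Given an arbitrary instance of \textsc{ExistsUMandEFx}, I pass the very same instance to a hypothetical algorithm for \textsc{ComputeUMwithinEFx} and let $w_1$ be the returned value, i.e.\ the maximum utilitarian welfare over all EFx allocations. Second, I compute the unconstrained utilitarian optimum $w_0$ directly, in linear time, by awarding each item to an agent who values it most.

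The crux is the equivalence: a UM+EFx allocation exists if-and-only-if $w_0 = w_1$. One direction is immediate, since any UM+EFx allocation is an EFx allocation achieving the unconstrained maximum, forcing $w_1 = w_0$. For the converse, note that $w_0 \geq w_1$ always holds (EFx is merely a constraint on $\mathcal{A}$), so $w_1 = w_0$ means the EFx allocation witnessing $w_1$ is itself utilitarian-maximal, hence UM+EFx. Thus a single call to the optimization oracle, followed by the trivial computation of $w_0$ and a comparison, decides \textsc{ExistsUMandEFx}; this is a polynomial-time one-to-one reduction, so \textsc{ComputeUMwithinEFx} inherits NP-hardness.

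Finally, the identical argument handles \textsc{ComputeUMwithinPROPx}: I would simply substitute Remark~\ref{2agents-propx-um} (NP-completeness of \textsc{ExistsUMandPROPx} for two agents) for Theorem~\ref{2agents-efx-um}. The reduction never uses any property of the fairness notion beyond (i) the unconstrained optimum being efficiently computable and (ii) the corresponding existence question being NP-hard, both of which hold for PROPx.

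I anticipate no genuine obstacle here: this is a textbook Turing reduction from a decision problem to its optimization counterpart, and the lone verification needed --- that comparing $w_0$ with $w_1$ correctly decides existence --- follows directly from the definition of ``UM within F''. The only point I would state with care is that $w_0$ is computable in linear time, which is precisely the greedy observation already noted earlier in the paper.
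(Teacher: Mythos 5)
Your proposal is correct and matches the paper's intended argument: the corollary is obtained exactly by the decision-to-optimization reduction of Corollary~\ref{um-within-ef1-strongly-hard}, instantiated with Theorem~\ref{2agents-efx-um} and Remark~\ref{2agents-propx-um} as the hard decision problems, comparing the constrained optimum $w_1$ against the unconstrained optimum $w_0$. No gaps.
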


For EF1 and PROP1 we showed that, for three agents, both the decision problems (UM and fair) and the corresponding maximization problems (UM within fair) are NP-hard.
Below we show that, for two agents, there is a substantial gap between the decision and the maximization problems: while deciding the existence of UM and fair allocations is in P, computing an allocation that is UM within the fair allocations is NP-hard.
\citet{BJK+19a,BJK+19b} proved this hardness for EF1.%
\footnote{
{
In fact, \citet{BJK+19b}[Appendix A] prove a stronger inapproximability result: it is NP-hard to obtain an $m^{\delta}$ factor approximation to the maximal utilitarian welfare subject to EF1, where $\delta>0$ is a fixed constant. Note that they use the term \textsc{FA-EF1} for the problem we call \textsc{ComputeUMwithinEF1}.
}
}
\begin{theorem}[\citet{BJK+19a}]\label{th:um-within-nph}
 For two agents, the problem
 \textsc{ComputeUMwithinEF1}
 is NP-hard.
\end{theorem}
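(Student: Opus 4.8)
The plan is to prove NP-hardness by a polynomial reduction from \textsc{Partition} (equivalently \textsc{Subset-Sum}), exploiting the gap --- already visible in this section --- between the \emph{decision} problem \textsc{ExistsUMandEF1}, which is in P by \thmref{2agents-ef1-um}, and the \emph{optimization} problem \textsc{ComputeUMwithinEF1}. The guiding idea is to build a two-agent instance in which the \emph{unique} UM allocation is badly unbalanced and fails EF1, so that every EF1 allocation must sacrifice welfare, and the amount of welfare that must be sacrificed encodes the \textsc{Partition} instance. Concretely, I would use a block of $m$ \emph{number-items} $o_1,\dots,o_m$ with $u_A(o_i)$ strictly larger than $u_B(o_i)$, the two valuations differing in a way that records $a_i$, so that UM hoards all number-items with Alice and leaves Bob EF1-envious; to restore EF1 one must transfer a sub-multiset $S$ of number-items to Bob. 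Since Alice values each number-item more, the utilitarian welfare equals a fixed constant $C$ minus the transferred ``difference''. Hence maximizing welfare subject to EF1 is equivalent to \emph{minimizing the transferred value subject to Bob's EF1 constraint being met}. Note that this reduction is necessarily \emph{non-normalized} (Alice's total value exceeds Bob's), matching the asterisk next to this entry in \tabref{table:summary-complexity}.

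For the forward direction I would show that a balanced \textsc{Partition} --- a subset summing to $W$ --- yields a transfer for which both agents' bundles are within ``one item'' of being envy-free, giving an EF1 allocation whose welfare attains a target value $V^{*}$. For the reverse direction I would argue that any EF1 allocation of welfare at least $V^{*}$ must transfer value that exactly meets Bob's threshold and no more, which pins the transferred subset-sum to the balanced value and thus exhibits a \textsc{Partition}; conversely, in a ``no'' instance the smallest transfer clearing Bob's threshold must strictly overshoot the balanced value, so the optimal welfare drops below $V^{*}$. Since the computed optimum equals $V^{*}$ if and only if the instance is a ``yes'' instance, any algorithm for \textsc{ComputeUMwithinEF1} would decide \textsc{Partition}.

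The hard part will be \emph{neutralizing the ``up to one item'' slack}. Bob's EF1 constraint compares his bundle to Alice's \emph{after deleting one item}, and if that deleted item were an arbitrary number-item its value could be as large as $\max_i a_i$, which in a \textsc{Partition} instance may be comparable to $W$ itself; such slack would blur the threshold and destroy the exact yes/no separation. I would control this by routing the envy through a single \emph{designated} item of fixed, known value that is always the one removed in the binding EF1 comparison, so that EF1 collapses to a clean linear inequality on the transferred subset-sum, and by choosing the remaining values so that no welfare can be gained by relocating the designated or the contested items --- forcing every optimal EF1 allocation into the canonical ``transfer-a-subset'' form assumed above. This is precisely the delicate point of the construction of \citet{BJK+19a}; indeed, their sharper analysis yields not merely NP-hardness but $m^{\delta}$-inapproximability (as recorded in the footnote preceding this theorem), which indicates that the slack must be suppressed quite carefully rather than through a one-line \textsc{Partition} gadget.
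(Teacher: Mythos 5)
First, a point of context: the paper does not actually prove this statement --- it is imported verbatim from \citet{BJK+19a}, which is why the theorem header carries that citation. The closest thing to an in-paper proof is the argument for the PROP1 analogue, \thmref{th:um-within-prop1}, which executes essentially the plan you describe: a reduction in which the unique UM allocation is grossly unbalanced, ``big items'' of carefully chosen value pin down which item plays the role of the ``up to one'' slack, and the fairness condition for the disadvantaged agent collapses to a clean linear inequality $w(S)\leq T$ on the transferred subset. So your architecture is the right one, and your observation that the reduction must be non-normalized is also correct and matches the asterisk in \tabref{table:summary-complexity}.

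The gap is that you never actually build the gadget. Everything in your write-up up to the last paragraph is a statement of intent (``I would use\ldots'', ``I would show\ldots''), and the last paragraph explicitly identifies the one genuinely delicate step --- forcing the item deleted in Bob's EF1 comparison to be a single designated item of fixed value, and ruling out non-canonical allocations that relocate the designated or contested items --- and then defers it to the construction of \citet{BJK+19a} rather than resolving it. That step is the entire content of the proof: without concrete valuations one cannot verify (a) that the UM allocation is unique and as claimed, (b) that every welfare-near-optimal EF1 allocation keeps the designated item where you want it (in the PROP1 proof this is Observation~2 plus the benchmark allocation $p_0$, which together eliminate all ``unreasonable'' allocations), and (c) that the binding EF1 inequality really has the removed item equal to the designated one, which requires an explicit inequality such as the $T\geq W/2$ case split in \thmref{th:um-within-prop1}. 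Moreover, the worry you raise about $\max_i a_i$ being comparable to $W$ in a raw \textsc{Partition} reduction is genuine, and is one reason the in-paper analogue reduces from \textsc{Knapsack} and introduces the quantity $w^*$ explicitly into the big items' values. As it stands the proposal is a correct plan with its central lemma left unproved, so it does not constitute a proof.
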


Below we show, using a different reduction, that the hardness holds for the weaker PROP1 condition too.

\begin{theorem}\label{th:um-within-prop1}
For two agents, the problem
\textsc{ComputeUMwithinPROP1}
is NP-hard.
\end{theorem}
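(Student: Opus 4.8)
The plan is to reduce from \textsc{Partition} (defined within the proof of Theorem~\ref{um-ef1-hard}), which gives weak NP-hardness. Given integers $a_1,\dots,a_m$ with $\sum_i a_i = 2W$, I would build a two-agent instance with $m$ \emph{number-items} $o_1,\dots,o_m$ and two \emph{dummy-items} $w_1,w_2$, where every item is worth to Bob exactly twice its worth to Alice: Alice values $o_i$ at $a_i$ and each $w_j$ at $\theta$, while Bob values $o_i$ at $2a_i$ and each $w_j$ at $2\theta$, for a large integer $\theta$ (say $\theta=2W+1$). Since $u_B\equiv 2u_A$ on every item, for \emph{any} allocation the welfare equals $u_A(B_A)+u_B(O\setminus B_A)=u_B(O)-u_A(B_A)$, where $B_A$ is Alice's bundle; hence $u_B(O)=4W+4\theta$ is fixed and maximizing utilitarian welfare is equivalent to \emph{minimizing} $u_A(B_A)$ subject to PROP1.

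The crux is to show that PROP1 turns this into the clean threshold $u_A(B_A)\ge W$. First I would note Alice's proportional share is $u_A(O)/2=W+\theta$. The minimizer never places a dummy in $B_A$, because each dummy would add $\theta$ to the objective while Alice can instead be satisfied using number-items at objective-cost at most $u_A(O\setminus\{w_1,w_2\})=2W<\theta$; so in the optimum both dummies are Bob's and $B_A$ consists only of number-items. Then Alice's most valuable item outside her bundle is a $w_j$ worth $\theta$ (using $\theta>\max_i a_i$), so her PROP1 condition reads $u_A(B_A)+\theta\ge W+\theta$, i.e.\ $u_A(B_A)\ge W$. The two identical dummies are chosen precisely so that the ``up to one item'' slack ($\theta$) exactly cancels the dummies' contribution to Alice's share (also $\theta$, half of their total Alice-value $2\theta$), leaving a threshold independent of $\theta$. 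I would also check Bob's PROP1 is automatic in the optimum, since giving Alice the fewest items leaves Bob with value $4W+4\theta-2u_A(B_A)\ge 2W+2\theta=u_B(O)/2$ whenever $u_A(B_A)\le W+\theta$. Thus the maximum welfare within PROP1 is $4W+4\theta-\sigma^\ast$, where $\sigma^\ast:=\min\{\sum_{o\in S}a_o : S\subseteq\{o_1,\dots,o_m\},\ \sum_{o\in S}a_o\ge W\}$.

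Finally I would connect $\sigma^\ast$ to \textsc{Partition}: the full set sums to $2W\ge W$ so $\sigma^\ast$ is well defined, and $\sigma^\ast=W$ iff some subset sums to exactly $W$, i.e.\ the instance is a ``yes''-instance; otherwise $\sigma^\ast\ge W+1$. Hence the optimal welfare within PROP1 equals $3W+4\theta$ precisely for ``yes''-instances and is strictly smaller otherwise, so any algorithm solving \textsc{ComputeUMwithinPROP1} decides \textsc{Partition}. I expect the main obstacle to be exactly the slack handling of the second paragraph: with a \emph{single} high-value dummy-item the effective threshold would drift away from $W$ (its full value enters the slack but only half enters the share), and since the slack can exceed the tiny gap a ``no''-instance may leave below $W$, the subset-sum boundary would be blurred; using two identical dummies is what restores an exact threshold and makes the reduction sound. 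Note that the instance is deliberately non-normalized ($u_A(O)=2W+2\theta\ne 4W+4\theta=u_B(O)$), consistent with the asterisk in Table~\ref{table:summary-complexity}.
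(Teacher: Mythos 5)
Your proof is correct, but it takes a genuinely different route from the paper's. The paper reduces from \textsc{Knapsack}: it introduces two (or, when $T<W/2$, three) ``big items'' valued so that Bob's value for every item strictly exceeds Alice's, forcing any welfare-reasonable PROP1 allocation to give all big items to Bob; Alice's PROP1 condition (with the biggest big item as her ``up to one'' item) then becomes exactly the weight constraint $w(S)\leq T$, while the welfare equals a constant plus $v(S)$, so maximizing welfare within PROP1 \emph{is} the Knapsack optimization. You instead reduce from \textsc{Partition} and exploit the proportionality trick $u_B\equiv 2u_A$, which turns welfare maximization into \emph{minimizing} Alice's bundle value; the two identical dummies calibrate the PROP1 slack so the constraint collapses to $u_A(B_A)\geq W$, and integrality separates yes- from no-instances by a gap of $1$ in the optimal welfare. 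I checked the details: the welfare identity, the argument that no minimizer gives Alice a dummy (there is a feasible allocation of objective $2W<\theta$), Alice's PROP1 threshold, and Bob's automatic proportionality all hold. Your construction is somewhat cleaner --- it needs no case split analogous to the paper's $T\geq W/2$ versus $T<W/2$ --- while the paper's buys a tighter, optimization-preserving correspondence with \textsc{Knapsack} (and, as noted there, relates to the known inapproximability for EF1). Both constructions are non-normalized, consistent with the asterisk in Table~\ref{table:summary-complexity}, and both leave Open Problem~\ref{open:normalized} untouched.
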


\begin{proof}
The proof is by reduction from \textsc{Knapsack}.


\pbDef{\textsc{Knapsack}}
{A set $M$ of $m$ elements, and a threshold value $T$. Each item $i$ has value $v_i$ and weight $w_i$. }
{
For each subset $S$ of elements,
denote its value by $v(S):= \sum_{i \in S} v_i$ and its weight by $w(S)  := \sum_{i \in S} w_i$.
What is a set $S\subset M$ that maximizes $v(S)$ subject to $w(S)\leq T$?}

Let 
$W := \sum_{i\in M} w_i$ be the sum of all weights, 
$w^*=\max_{i\in M}w_i$ be the largest weight
and
$V := \sum_{i\in M} v_i$ be the sum of all values.

Initially, we assume that $T\ge W/2$.
We construct a fair allocation instance $m+2$ items: $m$ ``usual items'' $o_i$ and two ``big items'' $o_A,o_B$, 
and two agents with the following valuations:

\begin{center}
\scalebox{0.85}{
\begin{tabular}{c|ccc|c}
\hline
Items: & $o_i$ (for $i\in[m]$) &$o_A$ &$o_B$ & sum \\ 
\hline 
Alice:  & $w_i$ & $2T-W+w^*$ & $w^* $ &  $2T+2w^*$ \\ 
\hline 
Bob:  &  $w_i + v_i$ 
&
$2T+V+w^*$
&
$W+V+w^*$
&
$2W+3V+2T+2w^*$ \\ 
\hline 
\end{tabular} 
}
\end{center}

%

We prove that
maximizing the utilitarian welfare subject to PROP1 in the allocation instance
is equivalent to 
maximizing the value subject to the weight constraint in the \textsc{Knapsack} instance.
We make several observations.

\textbf{Observation 1.}  Bob's value for every item is larger than Alice's value. Therefore, the unique UM allocation gives all items to Bob. 
Denote its utilitarian welfare by $U_{\max}$.
This allocation is not PROP1; to construct a PROP1 allocation, we must move some items from Bob to Alice; to construct an allocation that is UM within PROP1, we must choose which items to move such that the utilitarian welfare does not decrease too much w.r.t. $U_{\max}$.

\textbf{Observation 2.}  For any big item, Alice's value is smaller than Bob's value by $W+V$. So 
in any allocation in which a big item is allocated to Alice, the utilitarian welfare is at most $U_{\max}-W-V$.

\textbf{Observation 3.}  
Denote the allocation in which Alice gets all number-items and Bob gets all big items by $p_0$.
Its utilitarian welfare is 
$U_{\max}-V$. It is EF1 (and PROP1) since:
\begin{itemize}
\item $u_A(p_0(A)) = W$, while Alice's valuation to Bob's bundle without $o_A$ is only $w^*\leq W$;
\item $u_B(p_0(B)) = W+2V+2T+2w^*$, while Bob's valuation to Alice's bundle is only $W+V$.
\end{itemize}
Any allocation that is UM within PROP1 must have a utilitarian welfare at least as high as $p_0$, that is, at least $U_{\max}-V > U_{\max}-V-W$.
By observation 2, any such allocation must 
give both big items to Bob.
We focus only on these allocations from now on; let us call these allocations ``reasonable''.

There is a one-to-one correspondence between the subsets of knapsack items (subsets of $M$) to the set of reasonable allocations.
Each subset $S\subset M$ corresponds to a reasonable allocation, $p_S$, in which: (1) The big items, and all usual items in $S$, are given to Bob; and (2) all usual items not in $S$ are given to Alice.
In the allocation $p_S$,
the utilitarian welfare coming from the big items is $2T+2 V + W+2w^*$. 
The utilitarian welfare coming from the usual items is $W + v(S)$.
Therefore, the total welfare in $p_S$ is $C+v(S)$, where $C := 2W+2 V + 2T+2w^*$. Note that $C$ is a constant that does not depend on the selection of $S$.
Thus, finding a reasonable allocation $p_S$ with maximum welfare is equivalent to finding a subset $S\subseteq M$ with maximum value.

The value of $o_A$ and $o_B$ for Bob is so big, that Bob never envies Alice regardless of how the usual items are allocated. 
Moreover, for Alice, $o_A$ is the most valuable item in Bob's bundle (since $T\geq W/2$). Therefore, an allocation is PROP1 if-and-only-if, when $o_A$ is added to Alice's bundle,  she will have at least half of her total value.
Alice's valuation of $o_A$ is $2T-W+w^*$, and her valuation of her own bundle is $W-w(S)$. Therefore, the PROP1 condition for Alice is:

\begin{align*}
W-w(S) + &\left(2T-W+w^*\right) ~\geq~ T+w^*
\end{align*}
which holds if-and-only-if $w(S)\leq T$. 
Therefore, maximizing utilitarian welfare subject to PROP1 is equivalent to maximizing $v(S)$ subject to $w(S)\leq T$, as claimed.

In the complementary case $T < W/2$, the reduction is similar, but with one more big item $o_C$:

\begin{center}
\scalebox{0.85}{
\begin{tabular}{c|cccc|c}
\hline
Items: & $o_i$ (for $i\in[m]$) &$o_A$ &$o_B$ &$o_C$ & sum \\ 
\hline 
Alice:  & $w_i$ & $W-2T+2w^*$ & $W-2T+w^*$ & $W-2T+w^*$ & $4W-6T+4w^*$ \\ 
\hline 
Bob:  &  $w_i + v_i$ 
&
$W-2T+2w^*+V$
&
$W-2T+w^*+V$
&
$W-2T+w^*+V$ 
&
$4W-6T+4w^*+4V$ 
\\ 
\hline 
\end{tabular} 
}
\end{center}

Now, all big items $o_A$, $o_B$ and $o_C$ are allocated to Bob;
$o_A$ is still most valuable to Alice.
Alice's valuation of item $o_A$ is $W-2T+2w^*$, and her valuation of her own bundle is $W-w(S)$. Therefore, the PROP1 condition for Alice is:

\begin{align*}
W - w(S) +\left(W-2T+2w^*\right) ~\geq~ 2W-3T+2w^*
\end{align*}
which
holds if-and-only-if $w(S)\leq T$ as before.
\end{proof}

The hardness proofs in both 
Theorem \ref{th:um-within-prop1} and \citet{BJK+19b} use an instance that is not normalized --- the sum of Alice's valuations is smaller than the sum of Bob's valuations. 
In many situations, the agents' valuations are normalized
such that the sum of valuations is the same for all agents.
For example, in the popular fair division website \url{spliddit.org}
each person enters utilities that sum up to $1000$. 
Therefore, a natural question is whether the two problems
\textsc{ComputeUMwithinEF1}
and \textsc{ComputeUMwithinPROP1}
remain NP-hard with two agents even when restricted to normalized valuations.
Despite many efforts, we could not modify the reduction of Theorem \ref{th:um-within-prop1} to work with normalized valuations. 
\begin{open}
\label{open:normalized}
For two agents with normalized valuations, 
what is the run-time complexity of computing an allocation that is UM within EF1, or UM within PROP1?
\end{open}

\section{{UM within Fairness for Few Agents}}
\label{sec:fixedn}
We have proved that all our problems are strongly NP-hard when the number of agents is unbounded,
and weakly NP-hard when the number of agents is fixed. 
This raises the question of whether the problems are strongly NP-hard when the number of agents is fixed. 
We show that the answer is ``no'' by presenting pseudopolynomial time algorithms for  \textsc{ComputeUMwithinEF1} and \textsc{ComputeUMwithinPROP1}, and hence for deciding \textsc{ExistsUMandEF1} and \textsc{ExistsUMandPROP1} (see Corollary \ref{um-within-ef1-strongly-hard}).

Our algorithms assume that valuations are non-negative integers, and the sum of all values for a single agent is upper-bounded by some integer $V$. The run-time is polynomial in $V$. Therefore, in the special case in which the valuations are \emph{binary}, i.e., $u_i(o)\in\{0,1\}$ for all $i\in N, o\in O$, we have $V\leq m$, so the run-time is in $O(\text{poly}(m))$.
This special case has been studied substantially in the fair division literature \citep{bouveret2016characterizing,barman2018groupwise,halpern2020fair,darmann2015maximizing,barman2018greedy}.

The run-time of our algorithms is exponential in $n$. This is not surprising in view of the hardness results for unbounded $n$.
Our goal in the present paper is to prove that the problems are not strongly NP-hard when $n$ is fixed; we leave to future work
the problem of {finding algorithms with a smaller exponent}.

Our algorithms use dynamic programming. Each algorithm keeps a set of \emph{states}, which initially contains a single state representing the empty allocation. 
There are $m$ iterations. 
At each iteration $k$, the algorithm considers, for every state in the current set, all $n$ possible ways to allocate item $k$. For each option, a new state is added (if it does not exist yet).
Finally, the algorithm picks the optimal allocation from among the states generated at the last (the $m$-th) iteration. The run-time of such an algorithm is upper-bounded by the total number of possible states.
To flesh out this scheme, we have to specify, for each algorithm, what the states are, and how the next states are computed at each iteration.

As a warm-up, we present an algorithm for \textsc{ComputeUMwithinPROP}. Such an algorithm may be useful when a proportional allocation exists.
\begin{theorem}
\label{um-within-prop-pseudopoly}
Given $n$ agents and $m$ items, 
when all valuations are positive integers and the sum of all values for a single agent is at most $V$, 
it is possible to compute, in time $O(m V^n)$, a UM within PROP allocation (whenever a PROP allocation exists), or detect that a PROP allocation does not exist.
\end{theorem}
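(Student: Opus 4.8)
The plan is to instantiate the dynamic-programming scheme described above, taking the state after processing items $o_1,\dots,o_k$ to be the vector $(s_1,\dots,s_n)$ of \emph{self-utilities}, where $s_i = u_i(p(i))$ records the value agent $i$ assigns to the bundle it has been given so far. Since all valuations are positive integers with $\sum_{o\in O} u_i(o)\le V$, each coordinate $s_i$ is an integer in $\{0,1,\dots,V\}$, so there are at most $(V+1)^n = O(V^n)$ distinct states. Crucially, PROP is a purely \emph{intra-agent} condition: it compares $u_i(p(i))$ only to the fixed threshold $u_i(O)/n$. Hence the self-utility vector carries exactly the information needed both to test PROP and to evaluate the objective, because the utilitarian welfare of any allocation equals $\sum_{i\in N} u_i(p(i))$, i.e.\ the sum of the coordinates of its state.

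First I would initialize the state set with the all-zero vector and run $m$ iterations. At iteration $k$, for each state $(s_1,\dots,s_n)$ and each agent $j\in N$, I generate the successor state obtained by giving item $o_k$ to agent $j$, namely the vector with $s_j$ replaced by $s_j + u_j(o_k)$ and all other coordinates unchanged; duplicate states are merged, and I keep a back-pointer (a choice of recipient for $o_k$) so that the allocation can be reconstructed. After the $m$-th iteration, I discard every state that violates PROP --- that is, every state with $s_i < u_i(O)/n$ for some $i$ --- and among the survivors I return one maximizing $\sum_i s_i$, tracing back-pointers to output the corresponding allocation; if no surviving state exists, I report that no PROP allocation exists.

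For correctness, the key observation is that there is a surjection from complete allocations onto the set of states present after iteration $m$: allocating the items in order according to any fixed allocation $p$ reaches the state $(u_1(p(1)),\dots,u_n(p(n)))$, and conversely every reachable state is realized by at least one allocation. Consequently the set of self-utility vectors attainable by PROP allocations is exactly the set of surviving states, so maximizing the coordinate-sum over them returns the maximum utilitarian welfare achievable within PROP, and emptiness of the survivor set certifies that no PROP allocation exists. For the running time, each of the $m$ iterations processes $O(V^n)$ states with $O(n)$ transitions apiece, giving the claimed $O(m V^n)$ bound when $n$ is treated as fixed.

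The step I expect to require the most care is bounding the state space and confirming that filtering by PROP only at the final iteration is sound. This is precisely where PROP is easier than EF1: because proportionality never compares one agent's bundle to another's, the state need not record the cross-utilities $u_i(p(j))$ for $i\ne j$, which would otherwise enlarge the state space to $O(V^{n^2})$. Verifying that the self-utility vector is a sufficient statistic --- and that no feasibility information is lost by deferring the PROP check to the end --- is the crux; the transition and reconstruction bookkeeping is then routine.
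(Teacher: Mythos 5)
Your proposal is correct and matches the paper's own proof essentially step for step: the same state space of self-utility vectors $(t_1,\dots,t_n)$ with $t_i\in\{0,\dots,V\}$, the same $n$-way transitions per item, the same final filtering by the PROP thresholds $u_i(O)/n$, and the same $O(mV^n)$ count of states. No substantive differences to report.
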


\begin{proof}
The states are of the form $(k; t_1,\ldots,t_n)$, where $k\in\{0,\ldots,m\}$ and $t_i \in[0,V]$ for all $i\in[n]$. Each such state represents the fact that, by allocating items $o_1,\ldots,o_k$, it is possible to give a utility of $t_i$ to each agent $i$. 
The initial state is $(0; 0,\ldots,0)$, corresponding to the empty allocation.

Each state $(k-1; t_1,\ldots,t_n)$ with $k\in[m]$ has $n$ next states: for all $a\in[n]$, a next state $(k; t_1,\ldots,t_a+u_a(o_{k}),\ldots,t_n)$ corresponds to allocating the next item $o_{k}$ to agent $a$.

The states $(m; t_1,\ldots,t_n)$ correspond to final allocations. A state corresponds to a proportional allocation if and only if $t_i\geq v_i(O)/n$ for all $i\in[n]$. 
If there are no such states, then we return ``No proportional allocation exists''.
Otherwise, we choose a state in which the sum $t_1+\ldots+t_n$ is maximum; this sum represents the largest utilitarian welfare compatible with proportionality.
The maximizing allocation can be found by backtracking the construction of the states.

The total number of possible states in each iteration is $V^n$, so the total number of possible states overall is $O(m V^n)$.
\end{proof}

The above scheme cannot be directly applied to \textsc{ComputeUMwithinPROP1}. This is because PROP1 does not give a unique value-threshold for every agent: the value-threshold depends on the highest-valued item that is not assigned to that agent.
To handle this we need more elaborate states.

\begin{theorem}
\label{um-within-prop1-pseudopoly}
Given $n$ agents and $m$ items, 
when all valuations are positive integers and the sum of all values for a single agent is at most $V$, 
it is possible to compute a UM within PROP1 allocation in time 
{$O(m^{n+1} V^n)$}.
\end{theorem}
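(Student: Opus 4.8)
The plan is to extend the dynamic-programming scheme of \thmref{um-within-prop-pseudopoly} by enriching each state with enough information to check the PROP1 condition at the end. The key difficulty, as noted in the excerpt, is that PROP1 does not impose a fixed value threshold per agent: agent $i$'s threshold depends on the most valuable item (according to $u_i$) that agent $i$ does \emph{not} receive. Concretely, agent $i$ is PROP1 in a final allocation if and only if $u_i(p(i)) + \max_{o\notin p(i)} u_i(o) \geq u_i(O)/n$ (with the max taken as $0$ if $p(i)=O$). So the state must record, for each agent, not only the accumulated utility $t_i$ but also the value $h_i$ of the best item that agent has been denied so far.

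First I would define the states as tuples $(k; t_1,\ldots,t_n; h_1,\ldots,h_n)$, where $k\in\{0,\ldots,m\}$ is the number of items processed, $t_i\in[0,V]$ is the utility currently assigned to agent $i$, and $h_i\in[0,V]$ is the largest value $u_i(o)$ over all items $o\in\{o_1,\ldots,o_k\}$ that were \emph{not} given to agent $i$. The initial state is $(0;0,\ldots,0;0,\ldots,0)$. For the transition, from a state at level $k-1$ I process item $o_k$ by trying all $n$ agents: if $o_k$ is given to agent $a$, then $t_a$ increases by $u_a(o_k)$ and every \emph{other} agent $i\neq a$ updates $h_i \leftarrow \max(h_i, u_i(o_k))$, while $h_a$ and the other $t_i$ are unchanged. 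Each state thus has $n$ next states, exactly as before. At the end, a state $(m; t_1,\ldots,t_n; h_1,\ldots,h_n)$ corresponds to a PROP1 allocation if and only if $t_i + h_i \geq u_i(O)/n$ for every $i\in[n]$; among all such states I pick one maximizing $\sum_i t_i$, and recover the allocation by backtracking. Since each $t_i$ ranges over $[0,V]$ and each $h_i$ over the at most $m+1$ distinct item-values seen (bounded crudely by $V$, but more tightly by $m+1$ possible values since $h_i$ is always some $u_i(o)$ or $0$), the state count per level is at most $V^n (m+1)^n$; but one can bound the $h_i$ coordinate by the number of distinct items rather than by $V$, giving $O(m^{n+1} V^n)$ states overall across the $m$ levels.

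The correctness argument has two halves: that every state reachable at level $m$ genuinely corresponds to some allocation of all $m$ items achieving those $(t_i,h_i)$ values, and conversely that every allocation is represented by some reachable state with the matching values. Both follow by a routine induction on $k$, since the transitions faithfully mirror ``item $o_k$ goes to agent $a$.'' The PROP1 test at the end is correct because $h_i$ exactly equals $\max_{o\in p(i)^c} u_i(o)$ for the allocation backtracked from that state, so $t_i+h_i$ is precisely the best single-item augmentation of agent $i$'s bundle.

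The main obstacle, and the only genuinely delicate point, is the runtime accounting needed to get $O(m^{n+1}V^n)$ rather than the looser $O(m(V^{n}\cdot V^{n}))=O(m V^{2n})$. The trick is to observe that the $h_i$ coordinate can only take one of at most $m+1$ values (the value $0$ together with $\{u_i(o_1),\ldots,u_i(o_k)\}$), so the $h$-part of the state contributes a factor $(m+1)^n$ rather than $V^n$; multiplying by the $V^n$ choices for the $t$-part and the $m$ iterations yields $O(m\cdot V^n\cdot m^n)=O(m^{n+1}V^n)$. I would make this bound explicit and note that, exactly as in \thmref{um-within-prop-pseudopoly}, the maximizing allocation is recovered by backtracking.
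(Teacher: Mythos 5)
Your proposal is correct and follows essentially the same dynamic-programming approach as the paper: the paper's states store the identity $b_i$ of the most valuable item denied to agent $i$ (an element of $O\cup\{\emptyset\}$, hence $m+1$ choices), while you store its value $h_i$ and then observe it can take at most $m+1$ distinct values, yielding the identical state count and the same $O(m^{n+1}V^n)$ bound. The transitions, the final PROP1 test $t_i+h_i\geq u_i(O)/n$, and the backtracking step all match the paper's argument.
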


\begin{proof}
The states are of the form $(k; t_1,\ldots,t_n,b_1,\ldots,b_n)$, where $k\in\{0,\ldots,m\}$ and $t_i \in[0,V]$ and $b_i\in O\cup \{\emptyset\}$ for all $i\in[n]$. 
Each such state represents the fact that there is an allocation of items $o_1,\ldots,o_k$, where each agent $i$ gets a utility of $t_i$, and the most valuable item given to agents other than $i$ is $b_i$.
The motivation for recording these items in the states is that, in order to check the PROP1 condition, we have to add to each agent's value, the value of an item that is allocated to another agent.

The initial state is $(0; 0,\ldots,0; \emptyset,\ldots,\emptyset)$. It represents the empty allocation, where each agent gets a utility of $0$ and no items.

For each state $(k-1; t_1,\ldots,t_n,b_1,\ldots,b_n)$ with $k\in[m]$, there are $n$ next states, constructed as follows. 
For all $a\in[n]$, there is a next state 
$(k;
t_1,\ldots,t_a+u_a(o_{k}),\ldots,t_n; b'_1,\ldots,b'_n)$, where
\begin{align}
\label{eq:new_largest_item_prop1}
b'_{i} := 
\begin{cases}
o_k & \text{~if~} i\neq a \text{~and~} u_i(o_k)> u_i(b_{i}) \text{~~(new most-valuable item for $i$ in $p(a)$);}
\\
b_{i} & \text{otherwise.}
\end{cases}
\end{align}


The states $(m; t_1,\ldots,t_n,b_1,\ldots,b_n)$ correspond to final allocations. A state corresponds to a PROP1 allocation if and only if $t_i + v_i(b_i)\geq v_i(O)/n$ for all $i\in[n]$. 
Note that there must be at least one such state, since a PROP1 allocation always exists.
From these states, we choose one that maximizes the sum $t_1+\ldots+t_n$; this sum represents the largest utilitarian welfare that is compatible with PROP1.
The maximizing allocation can be found by backtracking the construction of the states.

{The total number of possible states in each iteration is $V^n\cdot m^n$, so the total number of possible states overall is $O(m^{n+1}V^n)$.}
\end{proof}

\begin{remark}
\label{um-within-propx-pseudopoly}
{
With slight modifications in the state update rules, we can compute a UM-within-fair allocation for various other fairness criteria:
}

{For  UM-within-PROPx \citep{li2021almost}, interpret the $b_i$ in the state as: the \emph{least} valuable item given to agents other than $i$, and replace \eqref{eq:new_largest_item_prop1} by
\begin{align}
b'_{i} := 
\begin{cases}
o_k & \text{~if~} 
i \neq a \text{~and~} \underline{u_i(o_k)< u_i(b_{i}) \text{~or~} b_i=\emptyset} \text{~~(new least-valuable item for $i$ in $p(a)$);}
\\
b_{i} & \text{otherwise.}
\end{cases}
\end{align}
For UM-within-EQ1, interpret $b_i$ as the most valuatble item given to \emph{agent $i$},
and let
\begin{align}
b'_{i} := 
\begin{cases}
o_k & \text{~if~} 
\underline{i = a} \text{~and~} u_i(o_k)> u_i(b_{i}) \text{~~(new most-valuable item for $i$ in $\underline{p(i)}$);}
\\
b_{i} & \text{otherwise.}
\end{cases}
\end{align}
A state corresponds to an EQ1 allocation if and only if $t_j \geq 
t_i - v_i(b_i)$ for all $i,j\in[n]$. 
For  UM-within-EQx \citep{freeman2019equitable}, interpret $b_i$ as the least valuatble item given to agent $i$, and let
\begin{align}
b'_{i} := 
\begin{cases}
o_k & \text{~if~} 
i = a \text{~and~} \underline{u_i(o_k)< u_i(b_{i}) \text{~or~} b_i=\emptyset} \text{~~(new least-valuable item for $i$ in $p(i)$);}
\\
b_{i} & \text{otherwise.}
\end{cases}
\end{align}
}
\end{remark}

Next, we give a pseudopolynomial time algorithm for UM-within-EF.

\begin{theorem}
\label{um-within-ef-pseudopoly}
Given $n$ agents and $m$ items, 
when all valuations are positive integers and the sum of values for a single agent is at most $V$, 
it is possible to compute, in time $O(m V^{n(n-1)})$, a UM within EF allocation (whenever an EF allocation exists), or detect that an EF allocation does not exist.
\end{theorem}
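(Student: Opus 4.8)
The plan is to follow the same dynamic-programming scheme as in Theorem~\ref{um-within-prop-pseudopoly} and Theorem~\ref{um-within-prop1-pseudopoly}, but with a state that records exactly the information needed to test envy-freeness. The natural quantities to track are the \emph{cross-valuations}: for each ordered pair $(i,j)$ with $i\neq j$, the value $c_{ij}:=u_i(p(j))$ that agent $i$ assigns to agent $j$'s current bundle. Thus a state has the form $(k;\,(c_{ij})_{i\neq j})$ with $k\in\{0,\ldots,m\}$ and each $c_{ij}\in[0,V]$, representing that items $o_1,\ldots,o_k$ have been allocated so that agent $i$ values agent $j$'s bundle at $c_{ij}$. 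The initial state is $(0;\,0,\ldots,0)$, and giving the next item $o_k$ to agent $a$ produces the state in which $c_{ia}$ is increased by $u_i(o_k)$ for every $i\neq a$, while all other coordinates are unchanged (only the valuations of $a$'s bundle can change).

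The point that keeps the exponent at $n(n-1)$ rather than $n^2$ is that an agent's valuation of her \emph{own} bundle never has to be stored. Since $\sum_{j} u_i(p(j))$ equals $u_i(\{o_1,\ldots,o_k\})$, a quantity that can be precomputed for each $k$, we recover $u_i(p(i)) = u_i(\{o_1,\ldots,o_k\}) - \sum_{j\neq i} c_{ij}$ directly from the state. At the final iteration $k=m$ this reads $u_i(p(i)) = u_i(O) - \sum_{j\neq i} c_{ij}$. A final state then corresponds to an EF allocation if and only if $u_i(p(i)) \geq c_{ij}$ for all $i\neq j$, and its utilitarian welfare equals $\sum_i u_i(p(i)) = \sum_i u_i(O) - \sum_{i\neq j} c_{ij}$. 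Because $\sum_i u_i(O)$ is a fixed constant, maximizing welfare among the EF final states is the same as minimizing $\sum_{i\neq j} c_{ij}$ over them; the maximizing allocation is recovered by backtracking through the states, exactly as in the earlier proofs. If no final state is EF, we report that no EF allocation exists.

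For correctness I would verify that the cross-valuation tuple is a valid sufficient statistic: the transition rule depends only on the current coordinates $(c_{ij})$ and on the chosen agent $a$, not on the particular partial allocation that produced them, so any two partial allocations reaching the same state have identical continuations; and since both the EF predicate and the welfare are determined by the coordinates at $k=m$, it is safe to keep a single representative per distinct state. Counting states, each of the $n(n-1)$ coordinates ranges over the $V+1$ integers in $[0,V]$, giving $(V+1)^{n(n-1)}$ states per iteration and $O(m\,V^{n(n-1)})$ states overall, which bounds the run-time as claimed.

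I expect the only genuinely delicate step to be justifying the own-value recovery formula and confirming that it makes storing $u_i(p(i))$ unnecessary --- this is precisely what separates the $V^{n(n-1)}$ bound from the naive $V^{n^2}$ bound one would get by storing all $n^2$ pairwise valuations. Everything else (the transition rule, the EF test, the welfare objective, and the backtracking) is a routine adaptation of the preceding dynamic programs.
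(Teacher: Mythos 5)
Your proposal is correct and is essentially the paper's own argument: the same item-by-item dynamic program over $n(n-1)$ pairwise quantities, with the paper storing the differences $t_{i,j}=u_i(p(i))-u_i(p(j))\in[-V,V]$ where you store the cross-values $c_{ij}=u_i(p(j))\in[0,V]$ and recover $u_i(p(i))$ from the precomputed prefix sum $u_i(\{o_1,\ldots,o_k\})$. The two state encodings are equivalent sufficient statistics (the paper's welfare identity $\sum_{j\neq i}t_{i,j}=n\,u_i(p(i))-u_i(O)$ plays the same role as your minimization of $\sum_{i\neq j}c_{ij}$), and both give the $O(mV^{n(n-1)})$ bound.
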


\begin{proof}
The states are of the form $(k; (t_{i,j})_{i\neq j})$, where $k\in\{0,\ldots,m\}$ and $t_{i,j} \in[-V,V]$ for all $i,j\in[n]$ such that $i\neq j$.
Each such state represents the fact that there exists an allocation of items $o_1,\ldots,o_k$, 
in which, for all $i,j\in[n]: u_i(p(i)) - u_i(p(j)) = t_{i,j}$.
The initial state is $(0; 0,\ldots,0)$, corresponding to the empty allocation.

Each state $(k-1; (t_{i,j})_{i\neq j})$ with $k\in[m]$ has $n$ next states: for all $a\in[n]$, 
there is a next state $(k; (t'_{i,j})_{i\neq j})$
that corresponds to allocating the next item $o_{k}$ to agent $a$, where:
\begin{align}
\label{eq:new_diff}
t'_{i,j} = 
\begin{cases}
t_{i,j} + u_i(o_k) & i=a \text{~~(adding $o_k$ to the difference for the receiver);}
\\
t_{i,j} - u_j(o_k) & j=a\text{~~(subtracting $o_k$ from the difference for the non-receivers);}
\\
t_{i,j} & \text{otherwise.}
\end{cases}
\end{align}

The states $(m; (t_{i,j})_{i\neq j})$ correspond to final allocations.
A state corresponds to an envy-free allocation if and only if $t_{i,j}\geq 0$ for all $i\neq j$. 
If there are no such states, then we return ``No envy-free allocation exists''.
Otherwise, we choose a state that maximizes the sum of all $n(n-1)$ elements, $t_{1,2}+t_{1,3}+\ldots+t_{n,n-1}$. We claim that this corresponds to maximizing the utilitarian welfare. To see this, note that for each agent $i\in[n]$:
\begin{align*}
\sum_{j\neq i, j=1}^n t_{i,j} 
&= \sum_{j\neq i, j=1}^n [u_i(p(i)) - u_i(p(j))] 
\\
&= \sum_{j=1}^n [u_i(p(i)) - u_i(p(i))]
&& \text{~(adding $[u_i(p(i)) - u_i(p(i))]$)~}
\\
&= \left(\sum_{j=1}^n u_i(p(i))\right) - 
\left(\sum_{j=1}^n u_i(p(j))
\right)
\\
&=
n\cdot u_i(p(i)) - u_i(O).
\end{align*}
As the $u_i(O)$ are constants that do not depend on the allocation, a higher sum $t_{1,2}+t_{1,3}+\ldots+t_{n,n-1}$  corresponds to a higher sum $u_1(p(1))+\cdots +u_n(p(n))$, which is the utilitarian welfare. The maximizing allocation can be constructed by backtracking the construction of states.

The total number of possible states in each iteration is $V^{n(n-1)}$, so the total number of possible states overall is $O(m V^{n(n-1)})$.
\end{proof}

Finally, to compute UM within EF1, we need more state elements tracing the most valuable objects given to agents. 

\begin{theorem}
\label{um-within-ef1-pseudopoly}
Given $n$ agents and $m$ items, 
when all valuations are positive integers and the sum of all values for a single agent is at most $V$, 
it is possible to compute a UM within EF1 allocation in time {$O(m^{n(n-1)+1}\cdot V^{n(n-1)}) \approx O(m^{n^2} V^{n^2})$}.
\end{theorem}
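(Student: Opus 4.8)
The plan is to extend the dynamic-programming scheme of Theorem~\ref{um-within-ef-pseudopoly} (UM within EF) by augmenting each state with enough bookkeeping to verify the EF1 condition, exactly as Theorem~\ref{um-within-prop1-pseudopoly} augmented Theorem~\ref{um-within-prop-pseudopoly}. Recall that EF1 requires, for every ordered pair $(i,j)$ with $i \neq j$, the existence of some item $o \in p(j)$ with $u_i(p(i)) \geq u_i(p(j) \setminus \{o\})$. Equivalently, agent $i$ does not envy $j$ after the most valuable item (in $i$'s eyes) is removed from $j$'s bundle. So in addition to the $n(n-1)$ difference coordinates $t_{i,j} = u_i(p(i)) - u_i(p(j))$ already tracked, I would record, for each ordered pair $(i,j)$, the identity $b_{i,j} \in O \cup \{\emptyset\}$ of the item in $p(j)$ that is most valuable according to $u_i$. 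The EF1 test at a final state then reads: $t_{i,j} + u_i(b_{i,j}) \geq 0$ for all $i \neq j$.

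First I would define the state as $(k; (t_{i,j})_{i\neq j}, (b_{i,j})_{i\neq j})$, with $k \in \{0,\ldots,m\}$, each $t_{i,j} \in [-V,V]$, and each $b_{i,j} \in O \cup \{\emptyset\}$; the initial state is the all-zero/all-$\emptyset$ state for the empty allocation. The transition on allocating item $o_k$ to agent $a$ updates the $t_{i,j}$ coordinates exactly by \eqref{eq:new_diff}, and updates the $b_{i,j}$ coordinates by the rule that $b_{i,j}$ changes to $o_k$ precisely when the receiver is $j$ (so $a = j$) and $u_i(o_k) > u_i(b_{i,j})$, reflecting that a new candidate for ``most valuable item in $p(j)$ for $i$'' has just entered $j$'s bundle; otherwise $b_{i,j}$ is unchanged. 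At the final layer $k = m$, I would keep every state satisfying $t_{i,j} + u_i(b_{i,j}) \geq 0$ for all $i \neq j$ (an EF1 allocation always exists, so at least one survives), and among these select one maximizing $\sum_{i \neq j} t_{i,j}$; the identical telescoping computation used in Theorem~\ref{um-within-ef-pseudopoly} shows this sum equals $n \sum_i u_i(p(i)) - \sum_i u_i(O)$, so maximizing it maximizes the utilitarian welfare. The optimal allocation is recovered by backtracking.

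For correctness I would argue, as in the earlier proofs, that the set of states reachable at layer $k$ is exactly the set of tuples realizable by \emph{some} allocation of $o_1,\ldots,o_k$: every realizable tuple is reachable by following the corresponding item-assignments, and every reachable tuple is realizable since each transition corresponds to a concrete assignment. The only subtlety is that the $t_{i,j}$ and $b_{i,j}$ components together constitute \emph{sufficient} information to decide EF1: the value $u_i(p(j) \setminus \{o\})$ is minimized over $o \in p(j)$ precisely by removing the $u_i$-maximal item, which is $b_{i,j}$, and that quantity is $u_i(p(j)) - u_i(b_{i,j}) = u_i(p(i)) - t_{i,j} - u_i(b_{i,j})$; hence the EF1 inequality for the pair $(i,j)$ is equivalent to $t_{i,j} + u_i(b_{i,j}) \geq 0$, which depends only on state coordinates. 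Finally, the count: each iteration has at most $V^{n(n-1)}$ choices for the difference vector and at most $(m+1)^{n(n-1)}$ choices for the $b_{i,j}$ vector, giving $O(m^{n(n-1)} V^{n(n-1)})$ states per layer and $O(m^{n(n-1)+1} V^{n(n-1)})$ overall, as claimed.

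The main obstacle is the verification that the augmented state carries \emph{exactly} the information needed for the EF1 test without over- or under-counting: I must be careful that recording only the single most-valuable item $b_{i,j}$ (rather than, say, the full sorted bundle) genuinely suffices, which hinges on the ``up to one item'' relaxation removing \emph{one} item and on the minimizing choice being the top item. This is the same conceptual step that makes Theorem~\ref{um-within-prop1-pseudopoly} work, and the additive-valuations assumption is what guarantees that $u_i(p(j) \setminus \{o\}) = u_i(p(j)) - u_i(o)$ so that removing the single $u_i$-maximal item is optimal; I would flag this explicitly rather than let it pass silently.
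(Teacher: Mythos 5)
Your proposal is correct and follows essentially the same route as the paper: the identical augmented state $(k;(t_{i,j})_{i\neq j},(b_{i,j})_{i\neq j})$, the same transition rules, the same EF1 test $t_{i,j}+u_i(b_{i,j})\geq 0$, the same telescoping argument for welfare maximization, and the same state count. Your explicit justification that additivity makes removing the single $u_i$-maximal item optimal is a detail the paper leaves implicit, but the argument is otherwise the same.
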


\begin{proof}
The states are of the form $(k; (t_{i,j})_{i\neq j};
(b_{i,j})_{i\neq j}
)$, where $k\in\{0,\ldots,m\}$ and $t_{i,j} \in[-V,V]$ and 
$b_{i,j}\in O \cup \{\emptyset\}$ 
for all $i,j\in[n]$ such that $i\neq j$.
Each such state represents the fact that there is an allocation of items $o_1,\ldots,o_k$
in which, for all $i,j\in[n]: u_i(p(i)) - u_i(p(j)) = t_{i,j}$,
and in addition, item $b_{i,j}$ is the item that maximizes $u_i$ in $p(j)$.
We allow $b_{i,j}$ to be $\emptyset$, to handle the case in which $p(j)$ is empty.

The initial state is $(0; 0,\ldots,0; \emptyset,\ldots,\emptyset)$, which represents the empty allocation.
For each state $(k-1; (t_{i,j})_{i\neq j};
(b_{i,j})_{i\neq j}
)$ with $k\in[m]$, 
there are $n$ next states.
For each agent $a\in [n]$, 
the next state is
$(k; (t'_{i,j})_{i\neq j};
(b'_{i,j})_{i\neq j})$, where 
$t'_{i,j}$ is defined as in \eqref{eq:new_diff},
and 
\begin{align}
\label{eq:new_largest_item}
b'_{i,j} = 
\begin{cases}
o_k & j=a \text{~and~} u_i(o_k)> u_i(b_{i,j}) \text{~~(new most-valuable item for $i$ in $p(a)$);}
\\
b_{i,j} & \text{otherwise.}
\end{cases}
\end{align}

As in the proof of Theorem \ref{um-within-ef-pseudopoly},  maximizing utilitarian welfare is equivalent to maximizing the sum of $t_{1,2}+\dots+t_{n,n-1}$. So we choose a state maximizing this sum such that for all $i,j$ we have $t_{i,j}+b_{i,j}\ge 0$. This corresponds to an EF1 allocation maximizing the utilitarian welfare, which can be constructed by backtracking.

The total number of possible states in each iteration is $m^{n(n-1)}\cdot V^{n(n-1)}$, so the total number of possible states overall is $O(m^{n(n-1)+1}\cdot V^{n(n-1)})$.
\end{proof}

\begin{remark}
\label{um-within-efx-pseudopoly}
{
We can adapt the algorithm to find an allocation that is UM-within-EFx \citep{caragiannis2019envy}.
The interpretation of the elements $b_{i,j}$ in the states should change to : the item that \emph{minimizes} $u_i$ in $p(j)$. Equation \eqref{eq:new_largest_item} should be modified to:
\begin{align}
b'_{i,j} = 
\begin{cases}
o_k & j=a \text{~and~} (u_i(o_k)< u_i(b_{i,j}) \text{~or~} b_{i,j}=\emptyset)
\\
b_{i,j} & \text{otherwise.}
\end{cases}
\end{align}
}
\end{remark}

\begin{remark}
We need $n(n-1)$ variables $b_{i,j}$ because the most-valuable item in $p(j')$ can be different for different agents $j$. 
In the special case in which the instance is \emph{ordered} (all agents rank the items in the same order), we can use a single  $b_{j}$ for each agent, and the run-time becomes $O(m^{n+1}V^{n(n-1)})$.
\end{remark}

\section{{Empirical Evaluation}}
Allocating indivisible items fairly is receiving growing attention in many application areas including allocating time slots \cite{GoPr14a}, recommendations online \cite{Burke:AlgoFairness,chakraborty2019equality}, rides in taxies \cite{dickerson2018allocation}, and conference papers for review \cite{LMNW18a,stelmakh2021peerreview4all}. 
To check the practical applicability of computing efficient and fair allocation, in this section we investigate the runtime of the dynamic programmig algorithms of Section \ref{sec:fixedn} for UM within EF/EF1 and UM within PROP/PROP1 compared with direct implementations as Mixed Integer Linear Programs (MILPs). To this end, we implemented a simple allocation ILP in Gurobi 9.1 using the Python interface \footnote{Source Code: \url{https://github.com/tu-dai/IndivisibleAllocation}} and compared it with our direct dynamic programming implementations.\footnote{Source Code: \url{https://github.com/erelsgl/dynprog}}

\subsection{Mixed Integer Linear Programs for Finding Utilitarian Maximal Allocations with EF/EF1 and PROP/PROP1}

In this section we give formulations for the mixed integer linear programs that we used in our experiments. We begin with UM within Prop. Given a binary variable $assigned[a,o]$ for each $a \in N$ and $o \in O$, which takes value $1$ if agent $a$ is assigned object $o$ in the matching, as well as a continuos variable $utility[a]$ for each agent $a \in N$ which will take the total utility for agent $a$ for the assignment, we can find the utilitarian maximal (UM) assignment within the set of Proportional (PROP) assignments as follows.

\begin{center}
$
\begin{array}{|ll|l|}%
\hline
\max & \sum_{a\in N} utility[a], s.t., & \\
\text{(1)} & utility[a] = \sum_{o \in O} assigned[a,o] \cdot u_a[o] & \forall a \in N \\
\text{(2)} & utility[a] \geq \nicefrac{u_a[O]}{|N|}  & \forall a \in N \\
\hline
\end{array}
$
\end{center}

\noindent
We can use a similar formulation for the UMinEF formulation. Note that the constraint (2) needs to be for all $n(n-1)$ ordered pairs of $i,j \in N$, as envy is not necessarily a symmetric relation.

\begin{center}
$
\begin{array}{|ll|l|}%
\hline
\max & \sum_{a\in N} utility[a], s.t., & \\
\text{(1)} & utility[a] = \sum_{o \in O} assigned[a,o] \cdot u_a[o] & \forall a \in N \\
\text{(2)} & utility[i] \geq \sum_{o \in O} assigned[j,o] \cdot u_i[o] & \forall i,j \in N \\
\hline
\end{array}
$
\end{center}

The EF1/PROP1 formulations are slightly more complicated to encode without too large an increase in model size. Informally, for the PROP1 setting we add an auxiliary variable to track the highest value item (to agent $i$) that has not been allocated to agent $i$. We use a similar trick but for every pair $i,j$ of agents for EF1. Specifically, for PROP1, given additional continuos variable $max\_not\_assigned[a]$ for each $a \in N$ that tracks the value of the maximal unassigned object and binary variable $not\_assigned[a,o]$ for every $a \in N$ and $o \in O$ to track what objects are unassigned to agent $a$, we can express the UMinPROP1 MILP as follows.

\begin{center}
$
\begin{array}{|ll|l|}%
\hline
\max & \sum_{a\in N} utility[a], s.t., & \\
\text{(1)} & utility[a] = \sum_{o \in O} assigned[a,o] \cdot u_a[o] & \forall a \in N \\
\text{(2)} & not\_assigned\_best[a,o] \leq 1 - assigned[a,o] & \forall a \in N, o \in O \\
\text{(3)} & \sum_{o \in O} not\_assigned\_best[a,o] \leq 1 & \forall a \in N \\
\text{(4)} & value\_not\_assigned\_best[a] \leq  \sum_{o \in O} not\_assigned\_best[a,o] \cdot u_a[o] & \forall a \in N \\
\text{(5)} & utility[a] + value\_not\_assigned\_best[a] \geq \nicefrac{u_a[O]}{|N|}  & \forall a \in N \\
\hline
\end{array}
$
\end{center}
Constraints (2) and (3) enforce that at most one item $o$, which is not assigned to $a$, has $not\_assigned\_best[a,o]=1$; constraint (4) sets $value\_not\_assigned\_best[a]$ to at most the largest value of a not-assigned item;
constraint (5) uses this value to enforce the PROP1 constraint.

Finally, for UMinEF1 we use a similar idea as our UMinPROP1 formulation but instead of $not\_assigned\_best$ for each agent, we instead introduce variable $not\_assigned\_best[i,j,o]$ for each ordered pair $i,j \in N$ and object $o \in O$ that captures the best, according to $i$, item assigned to $j$ that is not assigned to $i$, and we also must track $value\_not\_assigned\_best[i,j]$ for each ordered pair $i,j \in N$.

\begin{center}
$
\begin{array}{|ll|l|}%
\hline
\max & \sum_{a\in N} utility[a], s.t., & \\
\text{(1)} & utility[a] = \sum_{o \in O} assigned[a,o] \cdot u_a[o] & \forall a \in N \\
\text{(2)} & not\_assigned\_best[i,j,o] \leq 1-assigned[i,o] & \forall i,j \in N, o \in O \\
\text{(3)} & not\_assigned\_best[i,j,o] \leq assigned[j,o] & \forall i,j \in N, o \in O \\
\text{(4)} & \sum_{o \in O} not\_assigned\_best[i,j,o] \leq 1 & \forall i,j \in N \\
\text{(5)} & value\_not\_assigned\_best[i,j] \leq  \sum_{o \in O} not\_assigned\_best[i,j,o] \cdot u_i[o] & \forall a \in N \\
\text{(6)} & utility[i] + value\_not\_assigned\_best[i,j]\geq \sum_{o \in O} assigned[j,o] \cdot u_i[o]  & \forall i \in N \\
\hline
\end{array}
$
\end{center}

\subsection{Experimental Details and Results}

We generate synthetic data for our experiments using a Mallows model and assigning agents Borda utilities, i.e., $m-1, m-2, \ldots 0$ for the $m$ items as is often done in the empirical literature in this area \cite{MaWa13a,MaWa17}. A Mallows model is controlled by a \emph{dispersion parameter}, $\phi$, which changes the distribution of preferences around a reference ranking \cite{mallows1957non}. Informally, Mallows models allow us to simulate the situation when all agents have identical preferences, $\phi = 0.0$, and the situation when agents have preferences drawn uniformly at random, $\phi = 1.0$, and every point in between as parameterized by the Kendall-Tau (sometimes called the swap) distance between the ordinal rankings. Hence we are able to test what the impact of agents having correlated preferences has on the runtime of the algorithms.

\begin{figure}
\centering
\includegraphics[width=\linewidth]{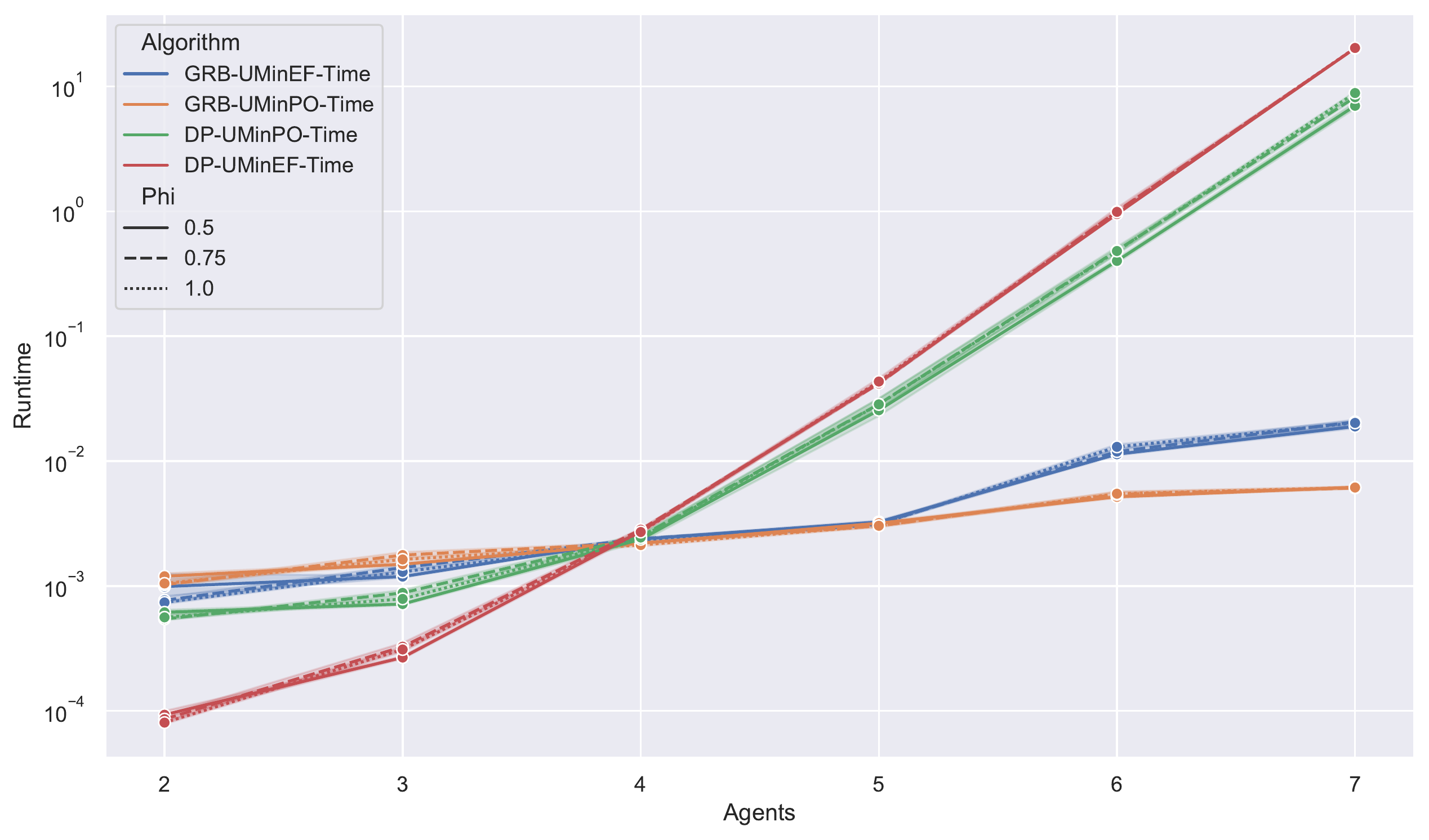}
\caption{Run-time (on a logarithmic scale) of the Dynamic Programming algorithms (DP) and Mixed Integer Linear Programming (GRB) versions of algorithms to find UM within EF or PROP allocations. We plotted the run-times for three different settings of the Mallows model dispersion parameter $\phi$; the plots for these three settings are almost completely overlapping, though if we do not display on a log scale y-axis, we can see a small difference in runtime. The MILP implementations are able to scale much better as we increase the number of agents and objects. At $n=7$, the bottom plot is GRB-PROP (denoting the fastest performance), the next one is GRB-EF, then DP-PROP, and finally DP-EF.
}
\label{fig:EF-PO}
\end{figure}

For all our settings we held the number of agents and items to be the same, $m=n$, and swept this value between 2 and 7. For each step we generated 50 instances as described above for $\phi \in \{0.5, 0.75. 1.0\}$. All experiments were run on a 2018 Mac Book Pro with 2.6 GHz 6-Core Intel Core i7 and 32 GB of RAM. Our results (with a log scale y-axis) are depicted in Figures \ref{fig:EF-PO} and \ref{fig:EF1-PO1}.

\begin{figure}
\centering
\includegraphics[width=\linewidth]{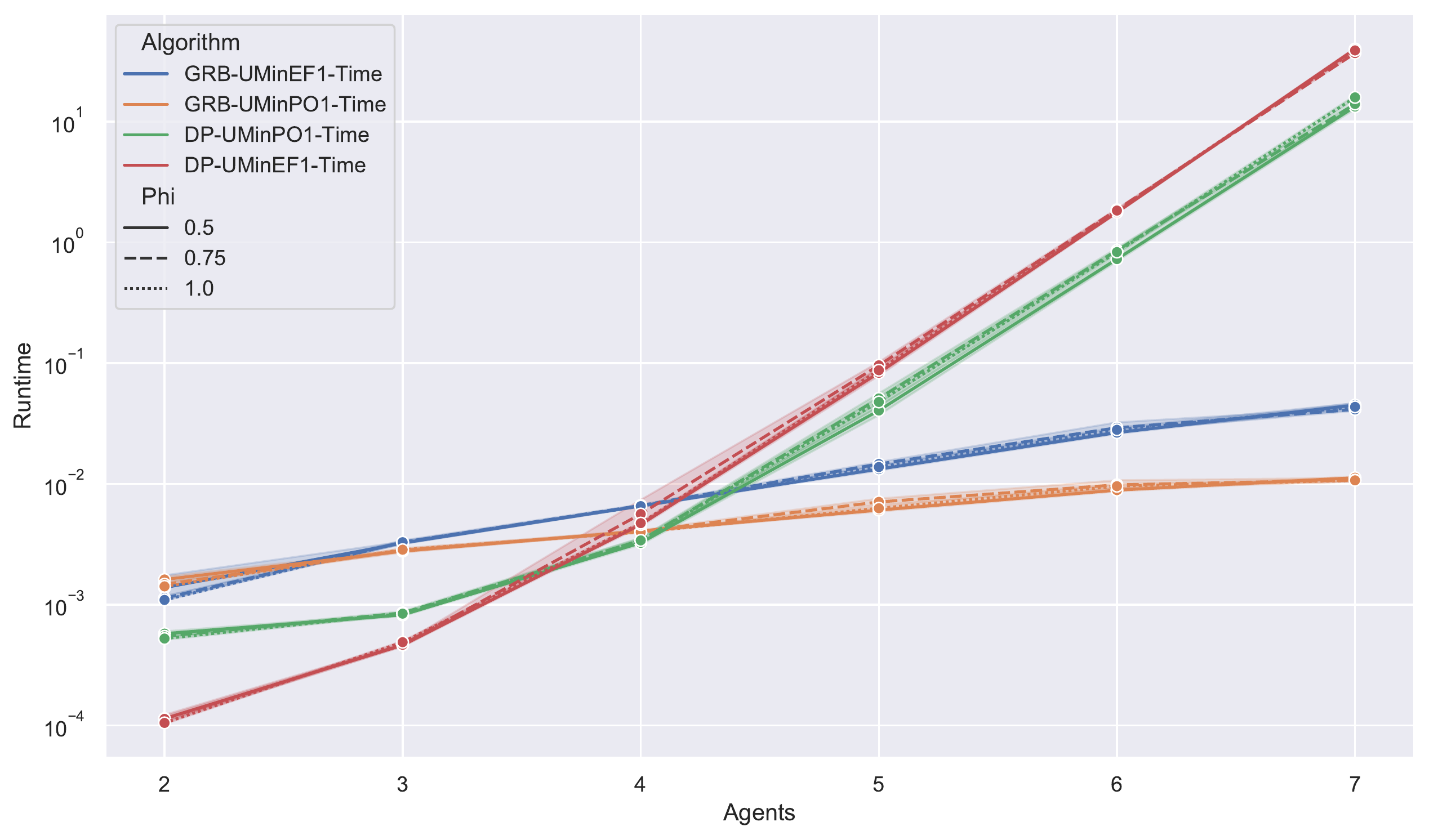}
\caption{Run-time (on a logarithmic scale) of the Dynamic Programming algorithms (DP) and Mixed Integer Linear Programming (GRB) versions of algorithms to find UM within EF1 or PROP1 allocations. We plotted the run-times for three different settings of the Mallows model dispersion parameter $\phi$; the plots for these three settings are almost completely overlapping, though if we do not display on a log scale y-axis, we can see a small difference in runtime. The MILP implementations are able to scale much better as we increase the number of agents and objects. At $n=7$, the bottom plot is GRB-PROP1 (denoting the fastest performance), the next one is GRB-EF1, then DP-PROP1, and finally DP-EF1.
}
\label{fig:EF1-PO1}
\end{figure}

Looking first at Figure \ref{fig:EF-PO}, the UM within EF versions of both the DP and the MILP versions of the algorithms take more time to solve than the UM within PROP. the MILP implementations are able to scale much better as we increase the size of the instances. The runtime over the 50 samples is fairly constant for each of the algorithms. While it is hard to distinguish in the graph, there is a small difference in average runtime as we change $\phi$, generally speaking, less correlated preferences, i.e., lower $\phi$ results in faster runtimes. 

Turning to Figure \ref{fig:EF1-PO1} we see much of the same performance. Again the MILP implementations outperform the dynamic programming solutions as we scale up the number of agents and items. In the EF1/PROP1 setting we do see a slightly more pronounced difference in runtime based on $\phi$ with less correlated preferences leading to faster runtimes. 

In comparing Figure \ref{fig:EF-PO} and \ref{fig:EF1-PO1} we can make some general statements about the reletive performance of EF/PROP versus EF1/PROP1. First, it is interesting to note that when $n < 5$, the dynamic programming algorithms are faster for both EF/EF1 and PROP/PROP1 settings. Secondly, for both cases, and more so for the EF1/PROP1, we can clearly see that for both the MILP solution and the dynamic programming solution, PROP/PROP1 is an easier problem than EF/EF1, resulting in faster runtimes.

Finally, we tracked the number of instances that had solutions that admitted either an EF or PROP allocation. Of the 900 instances we generated, only 11.2\% admitted an EF solution while 71.3\% admitted a PROP solution. Whereas, when looking at the EF1/PROP1 relaxations, that number is 100\% for both since we know these allocations always exist. This can be seen as evidence for the importance of studying the relaxation of the solution concepts to EF1 and PROP1 since many instances do not admit fully EF or PROP solutions.

\section{Conclusions and Future Work}
We provide a clear picture of the computational complexity of computing allocations that are both fair and maximize the utilitarian social welfare.  
We find that while existence can be decided efficiently when we have $n=2$ agents, in most cases both the question of existence for UM and fair allocations as well as finding UM within fair allocations, is computationally hard.  However, we are able to demonstrate positive results in the form of pseudopolynomial time algorithms when the number of agents is a constant $n \geq 3$ for both the fairness concepts EF1 and PROP1 as well as their stronger counterparts EF and PROP.

Although EF1 is stronger than PROP1, 
an algorithm for UM-within-EF1 does not imply an algorithm for  UM-within-PROP1, since the maximum utilitarian welfare in a PROP1 allocation might be higher than the maximum utilitarian welfare in an EF1 allocation. This raises an interesting question of how much utilitarian welfare is lost when moving from PROP1 to EF1 allocations. Our algorithms allow to study this question empirically in future work.

By using different thresholds in  the algorithms of 
Theorems \ref{um-within-prop-pseudopoly}
and \ref{um-within-prop1-pseudopoly},
these algorithms can be adapted to handle other fairness notions, such as weighted proportionality (for agents with different entitlements), maximin-share fairness \citep{Budi11a} in cases in which the maximin-shares of the agents are known,
or thresholds computed from picking sequences, recently introduced by  \citet{gourves2021fairness}.

Our other algorithms can be adapted to handle other notions of fairness that are based on the ``up to 1 item'' concept.
This is because the tables constructed by these algorithms contain information about the bundle values and about items that are contained / not contained in them.

Similarly, our algorithms can be adapted to handle \emph{egalitarian optimality} (maximizing the minimum utility), by using the minimum in the last step instead of the sum.
However, we do not know if they 
can be adapted to more complex efficiency notion, such as \emph{rank maximality} \citep{irving2006rank}.

Other directions for future work include:
\begin{enumerate}
\item Improving the run-time complexity of the algorithms for a fixed number of agents;
\item Devising faster algorithms for restricted domains, and for approximately-maximal utilitarian welfare;
\item Constructing datasets of utilities based on experiments with humans, and examining the performance of our algorithms on such datasets \cite{DBLP:conf/ijcai/Mattei20}.
\end{enumerate}

%

\section*{Acknowledgments}
We are grateful to Andrzej Kaczmarczyk for his help in understanding his paper \cite{bredereck2019high}, and to anonymous reviewers for their helpful comments.

Haris Aziz is supported by  the Australian Defence Science and Technology Group (DSTG) under the project
``Auctioning for distributed multi vehicle planning'' (MYIP 9953) and by the Asian Office of Aerospace Research and Development (AOARD) under the project 
``Efficient and fair decentralized task allocation algorithms for autonomous vehicles'' (FA2386-20-1-4063). 
Xin Huang is supported in part at the Technion by an Aly Kaufman Fellowship.
Nicholas Mattei is supported by NSF Award IIS-2007955, ``Modeling and Learning Ethical Principles for Embedding into Group Decision Support Systems,'' and an IBM Faculty Research Award.
Erel Segal-Halevi is supported by the Israel Science Foundation (grant no. 712/20).

\clearpage
\appendix
\renewcommand{\thesection}{\Alph{section}}
\normalsize
\section*{APPENDIX}
\section{EF1 vs. PROP1}
\label{sec:ef1-prop1}
It is known that, with additive valuations, EF1 implies PROP1. For completeness, we provide a proof below.
\begin{proposition}
Let $i\in N$ be an agent with an additive utility function $u_i$. If an allocation $p$ is EF1 for $i$, then $p$ is PROP1 for $i$.
\end{proposition}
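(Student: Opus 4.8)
The plan is to exhibit a single item whose addition to agent $i$'s bundle raises its value to at least the proportional share $u_i(O)/n$. First I would dispose of the degenerate case: if $O\setminus p(i)=\emptyset$ then $u_i(p(i))=u_i(O)\ge u_i(O)/n$, so PROP1 holds with the empty witness set. So assume $O\setminus p(i)\neq\emptyset$ and let $g$ be an item maximizing $u_i$ among all items \emph{not} assigned to $i$, i.e. $g\in\arg\max_{o\in O\setminus p(i)}u_i(o)$. This $g$ will serve as the single item certifying PROP1.

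The heart of the argument is a per-agent inequality. By EF1 applied to the pair $(i,j)$ for each $j\neq i$, there is an item $g_j\in p(j)$ with $u_i(p(i))\ge u_i(p(j)\setminus\{g_j\})=u_i(p(j))-u_i(g_j)$, using additivity. Since $g_j\in p(j)\subseteq O\setminus p(i)$ and $g$ is a most-valued such item, we have $u_i(g)\ge u_i(g_j)$, and hence
\[
u_i(p(i))+u_i(g)\ \ge\ u_i(p(i))+u_i(g_j)\ \ge\ u_i(p(j))\qquad\text{for every }j\neq i.
\]
For $j=i$ the same bound $u_i(p(i))+u_i(g)\ge u_i(p(i))$ holds trivially because $u_i(g)\ge 0$.

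Finally I would sum the displayed inequality over all $n$ agents $j\in N$ and use that the bundles partition $O$:
\[
n\bigl(u_i(p(i))+u_i(g)\bigr)\ \ge\ \sum_{j\in N}u_i(p(j))\ =\ u_i(O),
\]
which rearranges to $u_i(p(i)\cup\{g\})=u_i(p(i))+u_i(g)\ge u_i(O)/n$. Taking $O'=\{g\}$, of cardinality $1$, witnesses PROP1 for $i$.

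The only genuine decision in the proof is the choice of $g$ as the \emph{global} maximizer over all unowned items, rather than a per-agent maximizer: this is precisely what allows one fixed item to simultaneously dominate every $g_j$. Once that observation is in place, the remaining work is the routine summation above, so I do not anticipate any real obstacle beyond getting this choice right and correctly handling the $j=i$ term in the sum.
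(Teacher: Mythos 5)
Your proposal is correct and follows essentially the same route as the paper's proof: both take the maximum $u_i$-value over items outside $p(i)$ (your $u_i(g)$ is the paper's $U$), apply EF1 to each pair $(i,j)$, sum the resulting inequalities over all $n$ agents, and divide by $n$. The only cosmetic difference is that you name the witnessing item explicitly rather than working with its value.
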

\begin{proof}
If $p(i) = O$ then the allocation if obviously PROP1 for $i$, so we assume $p(i) \subsetneq O$.
Let $U := \max_{o\in O\setminus p(i)} u_i(o) = $ the value of the most valuable item that is not allocated to $i$.

The definition of EF1 implies that, for every agent $j\neq i$:
\begin{align*}
u_i(p(i)) \geq u_i(p(j)) - U.
\end{align*}
The same obviously holds when $j = i$. Summing over all $j\in \{1,\ldots,n\}$ yields:
\begin{align*}
n\cdot u_i(p(i)) \geq u_i(O) - n\cdot U.
\end{align*}
Dividing by $n$ yields:
\begin{align*}
& 
u_i(p(i)) \geq u_i(O)/n - U
\\
\implies &
u_i(p(i)) + U \geq u_i(O)/n
\end{align*}
which is the condition for PROP1.
\end{proof}

The following example shows that EF1 is strictly stronger than PROP1, even when there are only two agents.
\begin{example}
 	Consider the following instance with 2 agents and 7 items where the number is the utility to each agent.
 	\begin{center}
 	\begin{tabular}{ccccccc}
 	\hline
 	 &$a$&$b$ ($\times 6$)&\\
 	\hline
 	Alice:  & 4& 1 \\
 	\hline
 	Bob:  & 4 &1 \\
 	\hline
 	\end{tabular}
 	\end{center}
 	Suppose Alice has $a$ and Bob has all the six $b$ items.
 	Then the allocation is PROP1 but not EF1 for Alice.
\end{example}

\end{document}